\newtheorem{definition}[theorem]{Definition}
\providecommand{\U}[1]{\protect\rule{.1in}{.1in}}
\begin{document}

\title{On determinism and well-posedness in multiple time dimensions}
\author[W. Craig and S. Weinstein]{Walter Craig$^{1}$ and Steven Weinstein$^{2}$}

\affiliation{$^1$Department of Mathematics and Statistics, McMaster
University, \\ Hamilton, ON, L8S 4K1, Canada. Email: craig@math.mcmaster.ca.
\\ \hfill . \\
$^2$Perimeter Institute for Theoretical Physics, 31
Caroline St North, \\ Waterloo, ON\ N2L 2Y5, Canada.
Email:sweinstein@perimeterinstitute.ca. \\ and \\
University of Waterloo Dept. of Philosophy, 200 University Ave W, \\
Waterloo, ON\ \ N2L\ 3G1, Canada.
Email: sw@uwaterloo.ca.}

\maketitle

\begin{abstract}
{ultrahyperbolic equation, nonlocal constraint} \textbf{Abstract: } We study
the initial value problem for the wave equation and the ultrahyperbolic
equation for data posed on an initial hypersurface surface of mixed space- and
timelike signature. \ We show that under a nonlocal constraint, the initial
value problem posed on codimension-one hypersurfaces --- the Cauchy problem
--- has global unique solutions in the Sobolev spaces $H^{m}$. Thus it is
well-posed. \ In contrast, we show that the initial value problem on higher
codimension hypersurfaces is ill-posed due to failure of uniqueness, at least
when specifying a finite number of derivatives of the data. \ This failure is
in contrast to a uniqueness result for data given in an arbitrary neighborhood
of such initial hypersurfaces, which Courant deduces from Asgeirsson's mean
value theorem. We give a generalization of Courant's theorem which extends to
a broader class of equations. \ The proofs use Fourier synthesis and the
Holmgren--John uniqueness theorem.
\end{abstract}

\label{firstpage}


\section{Introduction}

The field equation
\[
\triangle u-\partial_{y}^{2}u=0
\]
in Minkowski spacetime is of central physical importance, as it describes the
propagation of many of the physical quantities described by field theories,
including the components of the electromagnetic field in a vacuum. Its
generalization to a theory which has multiple times is an ultrahyperbolic
equation. The study of these equations provides a useful window onto the
mathematical status of physical theories involving multiple times, and perhaps
more importantly, provides insight into the extent to which the ordinary
concepts of causality and determinism survive the transition to multiple time dimensions.

Consideration of theories with multiple times has been relatively rare because
it is widely believed that they are inherently unstable, and thus are not
deterministic in a physically meaningful sense. Certain significant developments in
theoretical physics, notably string theory, require additional dimensions, and
in most work to date\footnote[1]{Exceptions include the work of Tegmark
(1997), Hull (1999), Hull \& Khuri (2000), and Bars (2001).} the signature for
the extra dimensions is spatial, reflecting in part this concern with
instability. Motivated by this, the purpose of the present paper is to
reconsider the questions of stability, uniqueness and determinism of the
initial value problem in the presence of multiple time dimensions. We take the
model field equation in this setting to be the simple generalizations of the
wave equation to multiple times, the ultrahyperbolic equation. We find
that the issue of stability and uniqueness for the Cauchy problem can be
addressed by imposing nonlocal constraints that arise naturally from the field equations.

It may be thought reasonable to go beyond the traditional Cauchy problem, and
give initial data on hypersurfaces of higher codimension.  We show
that under the above constraints one can preserve
stability in this setting, but uniqueness is lost, and thus
determinism. Indeed, one may specify an arbitrary finite number of normal
derivatives of the solution on the higher codimension hypersurface, and insist
upon smooth solutions, yet still fail to achieve uniqueness. In contrast to
this, we conclude with a result that essentially recovers and generalizes a
theorem of Courant, which shows that the values of a solution in an
arbitrarily small neighborhood of the initial hypersurface are sufficient to
determine the solution uniquely. In related and prior work, Woodhouse
(1992) studied the case of two space and two time dimensions with 
initial data on a spacelike hypersurface (thus of codimension 2), 
using the Penrose twistor transform in the real case. He also recovered
the uniqueness result of Courant and its implicit constraints
on well-posed inital data for the Cauchy problem. Our work
provides a rigorous analytic alternative for his solution
method, which is not restricted to this choice of space and
time dimensions. We remark that none of these results rely
upon properties of analyticity of the data or the solution.

To fix our notation, the wave equation in $d_{1}$-many space 
dimensions and one time dimension is
\begin{equation}
\triangle_{x}u-\partial_{y}^{2}u:=%
{\displaystyle\sum\limits_{j=1}^{d_{1}}}
\partial_{x_{j}}^{2}u-\partial_{y}^{2}u=0\text{ .}%
\end{equation}
The standard Cauchy problem is posed on $N=\left\{  (x,y)\in\mathbb{R}%
_{x}^{d_{1}}\times\mathbb{R}_{y}^{1}:y=0\right\}  $, a spacelike codimension
one linear hypersurface, for initial data
\[
u(x,0)=f(x)\text{, }\partial_{y}u(x,0)=g(x)\text{.}%
\]
A nonstandard Cauchy problem is posed for a linear hypersurface of mixed
signature $N=\left\{  (x,y):x_{1}=0\right\}  \subseteq\mathbb{R}_{x}^{d_{1}%
}\times\mathbb{R}_{y}^{1}$, namely
\[
u(0,x^{\prime},y)=f(x^{\prime},y)\text{, }\partial_{x_{1}}u(0,x^{\prime
},y)=g(x^{\prime},y)\text{,}%
\]
where the notation is that $x=(x_{1},x^{\prime})\in\mathbb{R}^{d_{1}}$.
Courant (1962) calls this the non-spacelike Cauchy problem, but to avoid
confusion with the non-characteristic Cauchy problem, we call it a Cauchy
problem of \emph{mixed signature}.

An ultrahyperbolic equation has the form
\begin{equation}
\triangle_{x}u-\triangle_{y}u:=%
{\displaystyle\sum\limits_{j=1}^{d_{1}}}
\partial_{x_{j}}^{2}u-%
{\displaystyle\sum\limits_{j=1}^{d_{2}}}
\partial_{y_{j}}^{2}u=0\text{ ,} \label{ultra}%
\end{equation}
where $x\in\mathbb{R}^{d_{1}}$ are considered to be the spacelike variables
and $y\in\mathbb{R}^{d_{2}}$ are timelike. The Cauchy problem considers
initial data posed on a linear hypersurface of codimension one. Choosing
$y_{1}$ as the direction of evolution, Cauchy data consist of
\[
u(x,0,y^{\prime})=f(x,y^{\prime})\text{, }\partial_{y_{1}}u(x,0,y^{\prime
})=g(x,y^{\prime})
\]
on the hypersurface $N=\left\{  (x,y)\in\mathbb{R}_{x}^{d_{1}}\times
\mathbb{R}_{y}^{d_{2}}:y_{1}=0\right\}  $.

The initial value problem on a \textit{higher} codimension hypersurface $M$
could take various forms. A natural problem from the perspective of theories
with multiple times is to consider the spacelike hypersurface $M =\left\{
(x,y)\in\mathbb{R}_{x}^{d_{1}}\times\mathbb{R}_{y}^{d_{2}}:y=0\right\}  $ of
codimension $d_{2}$. Alternatively, one may consider more general $M =\{
(x,y)\in\mathbb{R}_{x}^{d_{1}}\times\mathbb{R}_{y}^{d_{2}}:x_{p_{1}+1} =
\dots= x_{d_{1}}=0$, $y_{P_{2}+1} = \dots=y_{d_{2}-1} = 0\} $ where $0\leq
p_{1} \leq d_{1}$ and $0 \leq p_{2} \leq d_{2}-1 $. There is in either case a
question as to how much data, and what constraints, are to be considered on
$M$. Some of the options are to (i) give the zeroth and first normal
derivatives of $u$ on $M$, (ii) give some finite number of derivatives of $u$
on $M$ which are compatible with the constraint imposed by the ultrahyperbolic
equation, or (iii) specify infinitely many derivatives of $u$ on $M$. In this
paper we consider the first two of these cases.

An outline of the results of this paper is as follows. In section 2 we use
Fourier methods to show that the Cauchy problem for the ultrahyperbolic
equation \eqref{ultra} is ill-posed in general but well-posed on Sobolev
spaces $H^{m}$ if an explicit nonlocal constraint is imposed upon the Cauchy
data. This applies as well to the wave equation with Cauchy data on a mixed
signature hypersurface.\ In section 3 we consider the initial value problem
for data given on higher codimension hypersurfaces, and we find that solutions
are highly nonunique for the initial value problems of type (i) and (ii)
above, even among $H^{m}$ smooth solutions and with the imposition of the
constraint given in section~2. \ In particular, for theories with multiple
times that can be transformed to the form of equation (\ref{ultra}), data
posed on the hypersurface $M=\left\{  y=0\right\}  $ do not uniquely determine
the solution at any other point in time $y\in\mathbb{R}^{d_{2}}\backslash
\{0\}$. The extension problem for higher numbers of derivatives is treated by
the same method as case (i) of zeroth and first normal derivatives. Regarding
case (iii), in which one specifies infinitely many derivatives on the initial
hypersurface $M$, we do not have an answer. We do show in section 4 that among
smooth solutions, data in an arbitrarily small ellipsoidal neighborhood of a
disk in $M$ uniquely determine the data in the envelope of its light cones.
This is analogous to a result in Courant (1962) that is derived from
Asgeirsson's mean value theorem.

\section{The Cauchy problem}

Let $x\in\mathbb{R}^{d_{1}}$ and $y\in\mathbb{R}^{d_{2}}$ be the Cartesian
coordinates of space-time, denote $y=(y_{1},y^{\prime})$ and consider the
Cauchy problem of evolution in the coordinate $y_{1}$. The Cauchy problem of
mixed signature that we address is posed as
\begin{equation}
\partial_{y_{1}}^{2}u=\triangle_{x}u-\triangle_{y^{\prime}}u\text{ ,}
\label{ultra_cauchy}%
\end{equation}
with Cauchy data $u(x,0,y^{\prime})=u_{0}(x,y^{\prime})$ and $\partial_{y_{1}%
}$ $u(x,0,y^{\prime})=u_{1}(x,y^{\prime})$. The standard Sobolev spaces
$H^{m}$ of functions of the variables $(x,y^{\prime})$ are defined as closures
of $C_{0}^{\infty}(\mathbb{R}^{d_{1}}\times\mathbb{R}^{d_{2}-1})$ with respect
to the norms
\[
\left\Vert f\right\Vert _{m}^{2}=
{\displaystyle\sum\limits_{\left\vert \alpha\right\vert +\left\vert
\beta\right\vert \leq m}}
{\displaystyle\int}
\left\vert \partial_{x}^{\alpha}\partial_{y^{\prime}}^{\beta}f(x,y^{\prime
})\right\vert ^{2}dxdy^{\prime}\text{ .}
\]
Additionally, there is an energy functional, of indefinite sign, that is
associated with equation (\ref{ultra_cauchy}), namely%
\[
E(u):=\frac{1}{2}%
{\displaystyle\iint}
\left\vert \partial_{y_{1}}u\right\vert ^{2}+\left\vert \nabla_{x}u\right\vert
^{2}-\left\vert \nabla_{y^{\prime}}u\right\vert ^{2}dxdy^{\prime}\text{ .}
\]

\begin{theorem}
\label{thm1}
Suppose that the evolution mapping $y_{1}\rightarrow\binom{u(x,y_{1}%
,y^{\prime})}{\partial_{y_{1}}u(x,y_{1},y^{\prime})}$ is in $C^{1}%
(\mathbb{R}_{y_{1}}:H^{1}\times H^{0})$. Then the energy is conserved along a
solution $u(\cdot,y_{1},\cdot)$:%
\[
E(u(\cdot,y_{1},\cdot))=E(u(\cdot,0,\cdot))\text{.}
\]
\end{theorem}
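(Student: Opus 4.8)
The plan is to show that $y_{1}\mapsto E(u(\cdot,y_{1},\cdot))$ is a $C^{1}$ function of $y_{1}$ whose derivative vanishes identically; then $E$ is constant and in particular equals its value at $y_{1}=0$. First I would verify that $E$ is finite and differentiable. The hypothesis that the pair $(u,\partial_{y_{1}}u)$ lies in $C^{1}(\mathbb{R}_{y_{1}}:H^{1}\times H^{0})$ gives, for each $y_{1}$, that $u(\cdot,y_{1},\cdot)\in H^{1}$ (so $\nabla_{x}u,\nabla_{y'}u\in L^{2}$) and $\partial_{y_{1}}u\in H^{0}=L^{2}$, so each of the three terms in $E$ is a finite $L^{2}$ quantity. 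Differentiating the first component in $H^{1}$ shows $\partial_{y_{1}}u\in C^{0}(\mathbb{R}:H^{1})$, hence $\nabla_{x}\partial_{y_{1}}u,\nabla_{y'}\partial_{y_{1}}u\in L^{2}$; differentiating the second component in $H^{0}$ shows $\partial_{y_{1}}^{2}u\in C^{0}(\mathbb{R}:L^{2})$. With this regularity the product rule for the Banach-space-valued map yields
\[
\frac{d}{dy_{1}}E=\langle \partial_{y_{1}}u,\ \partial_{y_{1}}^{2}u\rangle+\langle \nabla_{x}u,\ \nabla_{x}\partial_{y_{1}}u\rangle-\langle \nabla_{y'}u,\ \nabla_{y'}\partial_{y_{1}}u\rangle,
\]
all three brackets being $L^{2}$ inner products over $(x,y')$.

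Next I would integrate by parts in the spatial and secondary-time variables to move the gradients off $\partial_{y_{1}}u$. Since $u(\cdot,y_{1},\cdot)$ and $\partial_{y_{1}}u(\cdot,y_{1},\cdot)$ both lie in $H^{1}$, the identities $\langle \nabla_{x}u,\nabla_{x}\partial_{y_{1}}u\rangle=-\langle \triangle_{x}u,\partial_{y_{1}}u\rangle$ and the analogue for $y'$ hold as $H^{-1}$--$H^{1}$ dualities, with no boundary terms by density of $C_{0}^{\infty}$. This gives
\[
\frac{d}{dy_{1}}E=\langle \partial_{y_{1}}u,\ \partial_{y_{1}}^{2}u-\triangle_{x}u+\triangle_{y'}u\rangle,
\]
and by equation \eqref{ultra_cauchy} the second slot is identically zero, so $\frac{d}{dy_{1}}E=0$ and $E$ is constant in $y_{1}$, proving the theorem.

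The main obstacle is the regularity bookkeeping in this last grouping: individually $\triangle_{x}u$ and $\triangle_{y'}u$ lie only in $H^{-1}$, so the three brackets do not obviously combine into a single honest $L^{2}$ pairing. One must observe that equation \eqref{ultra_cauchy} forces the combination $\triangle_{x}u-\triangle_{y'}u=\partial_{y_{1}}^{2}u$ to lie in $L^{2}$, after which the pairing with $\partial_{y_{1}}u\in L^{2}$ is well defined and annihilated by the equation.

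A cleaner route that bypasses the $H^{-1}$ issue entirely, and matches the Fourier synthesis used elsewhere in the paper, is to pass to the Fourier transform $\hat{u}(\xi,\eta,y_{1})$ in the variables $(x,y')$ with dual variables $(\xi,\eta)$. By Plancherel,
\[
E=\tfrac{1}{2}\int |\partial_{y_{1}}\hat{u}|^{2}+(|\xi|^{2}-|\eta|^{2})|\hat{u}|^{2}\,d\xi\,d\eta,
\]
the field equation becomes the pointwise ODE $\partial_{y_{1}}^{2}\hat{u}=(|\eta|^{2}-|\xi|^{2})\hat{u}$, and differentiating under the integral sign produces, after substituting the ODE, an integrand proportional to $(|\xi|^{2}-|\eta|^{2})\,(\overline{\hat{u}}\,\partial_{y_{1}}\hat{u}-\hat{u}\,\overline{\partial_{y_{1}}\hat{u}})$. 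This is a real multiple of a purely imaginary quantity, so its real part vanishes and the energy derivative is zero. Here the multipliers $|\xi|^{2}$ and $|\eta|^{2}$ keep every expression an ordinary $L^{2}$ integral, so the only justification needed is the legitimacy of differentiating under the integral, which follows from the assumed $C^{1}$ regularity.
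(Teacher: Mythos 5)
Your main argument is exactly the paper's proof: differentiate the energy in $y_{1}$, integrate by parts in $(x,y')$, and observe that the resulting pairing of $\partial_{y_{1}}u$ with $\partial_{y_{1}}^{2}u-\triangle_{x}u+\triangle_{y'}u$ vanishes by equation \eqref{ultra_cauchy}. The extra regularity bookkeeping and the Plancherel-based alternative you sketch are sound but supplementary; the core computation coincides with the one in the paper.
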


\begin{proof}
Given $\binom{u(x,y_{1},y^{\prime})}{\partial_{y_{1}}u(x,y_{1},y^{\prime})}\in
C^{1}$, the following calculation is justified:%
\[%
\begin{array}
[c]{cl}%
\partial_{y_{1}}E(u) & =%
{\displaystyle\iint}
(\partial_{y_{1}}u\cdot\partial_{y_{1}}^{2}u+\nabla_{x}u\cdot\nabla
_{x}\partial_{y_{1}}u-\nabla_{y^{\prime}}u\cdot\nabla_{y^{\prime}}%
\partial_{y_{1}}u)dxdy^{\prime}\\
& =%
{\displaystyle\iint}
\partial_{y_{1}}u(\partial_{y_{1}}^{2}u+\triangle_{x}u+\triangle_{y^{\prime}%
}u)dxdy^{\prime}\\
& =0\text{ .}%
\end{array}
\]
\end{proof}

The key issue is that the Cauchy problem above for equation
(\ref{ultra_cauchy}) is ill-posed for $d_{2}\geq2$ and solutions are \emph{not} in
general in $C^{1}(\mathbb{R}_{y_{1}}:H^{1}\times H^{0})$. \ The energy is
indefinite and in particular not bounded below, hence it does not in general
define an energy norm with which to control the Sobolev norms of solutions of
the evolution equations.

To move to the next level of analysis, we give a Fourier synthesis of the
evolution operator for the Cauchy problem of mixed signature. \ Given
$\binom{u_{0}}{u_{1}}\in H^{m+1}\times H^{m}$, consider the Fourier space
variables $(x,y^{\prime})\rightarrow(\xi,\eta^{\prime})$ and define the
Fourier transform in the standard way,
\[
\binom{\hat{u}_{0}(\xi,\eta^{\prime})}{\hat{u}_{1}(\xi,\eta^{\prime})}%
=\frac{1}{\sqrt{2\pi}^{d}}%
{\displaystyle\iint}
e^{-i\xi\cdot x}e^{-i\eta^{\prime}\cdot y^{\prime}}\binom{u_{0}(x,y^{\prime}%
)}{u_{1}(x,y^{\prime})}dxdy^{\prime}%
\]
where $d=(d_{1}+d_{2}-1)$. On a formal level equation (\ref{ultra_cauchy})
under Fourier transform will read
\[
\partial_{y_{1}}^{2}\hat{u}=(-\left\vert \xi\right\vert ^{2}+\left\vert
\eta^{\prime}\right\vert ^{2})\hat{u}\text{ ,}%
\]
giving rise to the expression for the propagator, $\exp(y_{1}\sqrt
{\triangle_{x}-\triangle_{y^{\prime}}})$. The solution thus reads%

\[
\hat{u}(\xi,y_{1},\eta^{\prime})=\cos(\sqrt{\left\vert \xi\right\vert
^{2}-\left\vert \eta^{\prime}\right\vert ^{2}}y_{1})\hat{u}_{0}(\xi
,\eta^{\prime})+\frac{\sin(\sqrt{\left\vert \xi\right\vert ^{2}-\left\vert
\eta^{\prime}\right\vert ^{2}}y_{1})}{\sqrt{\left\vert \xi\right\vert
^{2}-\left\vert \eta^{\prime}\right\vert ^{2}}}\hat{u}_{1}(\xi,\eta^{\prime
})\text{ }%
\]
for $\left\vert \eta^{\prime}\right\vert \leq\left\vert \xi\right\vert $,
while%
\[
\hat{u}(\xi,y_{1},\eta^{\prime})=\cosh(\sqrt{\left\vert \eta^{\prime
}\right\vert ^{2}-\left\vert \xi\right\vert ^{2}}y_{1})\hat{u}_{0}(\xi
,\eta^{\prime})+\frac{\sinh(\sqrt{\left\vert \eta^{\prime}\right\vert
^{2}-\left\vert \xi\right\vert ^{2}}y_{1})}{\sqrt{\left\vert \eta^{\prime
}\right\vert ^{2}-\left\vert \xi\right\vert ^{2}}}\hat{u}_{1}(\xi,\eta
^{\prime})\text{ }%
\]
for $\left\vert \xi\right\vert <\left\vert \eta^{\prime}\right\vert $. That
is, the dispersion relation
\begin{equation}
\omega(\xi,\eta^{\prime})=\sqrt{\left\vert \xi\right\vert ^{2}-\left\vert
\eta^{\prime}\right\vert ^{2}} \label{dispersion}%
\end{equation}
holds in the Fourier space region $\left\{  \left\vert \eta^{\prime
}\right\vert \leq\left\vert \xi\right\vert \right\}  $, while in the
complementary region the evolution of a Fourier mode is described by the
Lyapunov exponent
\begin{equation}
\lambda(\xi,\eta^{\prime})=\sqrt{\left\vert \eta^{\prime}\right\vert
^{2}-\left\vert \xi\right\vert ^{2}}. \label{Lyapunov}%
\end{equation}

When the propagator is applied to data $\binom{u_{0}}{u_{1}}$ which is
analytic, this solution exists for at least short time; for analytic data of
exponential type, the solution is global. \ However, it is clear that general
initial data in $H^{m+1}\times H^{m}$ do not even give rise to solutions which
are tempered distributions for any nonzero $y_{1}$.

On the other hand, imposing a constraint on the initial data, the solution
process is well defined. The fact that some constraint is necessary is indeed
evident from the Asgeirsson mean value theorem, and its consequences, as
discussed in Courant (1962). The form of this nonlocal constraint is evident
from the Fourier synthesis, as we shall now see.

Define a phase space $X$ using an energy norm adapted to the propagator of
equation (\ref{ultra_cauchy}). Using the definition of the dispersion relation
\eqref{dispersion} and the Lyapunov exponent \eqref{Lyapunov}, and the
Plancherel identity, set $v=\binom{v_{0}}{v_{1}}$ and
\begin{align*}
\left\Vert v\right\Vert _{X}^{2}:=  &  \iint\limits_{\left\{  \left\vert
\eta^{\prime}\right\vert <\left\vert \xi\right\vert \right\}  }\omega^{2}%
(\xi,\eta^{\prime})\left\vert \hat{v}_{0}(\xi,\eta^{\prime})\right\vert
^{2}d\xi d\eta^{\prime}\\
&  +\iint\limits_{\left\{  \left\vert \xi\right\vert \leq\left\vert
\eta^{\prime}\right\vert \right\}  }\lambda^{2}(\xi,\eta^{\prime})\left\vert
\hat{v}_{0}(\xi,\eta^{\prime})\right\vert ^{2}d\xi d\eta^{\prime}\\
&  +\iint\left\vert \hat{v}_{1}(\xi,\eta^{\prime})\right\vert ^{2}d\xi
d\eta^{\prime}~\text{.}%
\end{align*}
This is a norm, unlike the actual energy associated with the equation
(\ref{ultra_cauchy}), and can be used to control solutions when the propagator
is restricted to the appropriate stable and/or unstable subspaces of $X$.
Define%
\begin{align}
X^{S}  &  =\left\{  v=\binom{v_{0}}{v_{1}}\in X:\ \frac{1}{2}\left(  \hat
{v}_{0}(\xi,\eta^{\prime})+\frac{\hat{v}_{1}(\xi,\eta^{\prime})}{\lambda
(\xi,\eta^{\prime})}\right)  =0\text{ for }\left\vert \xi\right\vert
<\left\vert \eta^{\prime}\right\vert \right\} \\
X^{U}  &  =\left\{  v\in X:\ \frac{1}{2}\left(  \hat{v}_{0}(\xi,\eta^{\prime
})-\frac{\hat{v}_{1}(\xi,\eta^{\prime})}{\lambda(\xi,\eta^{\prime})}\right)
=0\text{ for }\left\vert \xi\right\vert <\left\vert \eta^{\prime}\right\vert
\right\}
\end{align}
and%
\begin{align*}
X^{C}  &  =\left\{  v\in X:\mathrm{supp}\binom{\hat{v}_{0}}{\hat{v}_{1}}%
(\xi,\eta^{\prime})\subseteq\text{ }\left\{  \left\vert \xi\right\vert
>\left\vert \eta^{\prime}\right\vert \right\}  \right\} \\
&  =X^{S}\cap X^{U}\text{.}%
\end{align*}
The subspace $X^{S}$ corresponds to the center stable subspace for evolution
in $y_{1}\in\mathbb{R}^{+}$, the subspace $X^{U}$ corresponds to the center
unstable subspace, and $X^{C}$ is the center subspace. This nomenclature is
supported by the following theorem.

\begin{theorem}
\label{thm2}
For $\binom{u_{0}}{u_{1}}\in X^{S}$, the Cauchy problem of mixed signature
for equation (\ref{ultra_cauchy}) has a unique solution in $X$ for all $y_{1}
\in\mathbb{R}^{+}$. For $\binom{u_{0}}{u_{1}}\in X^{U}$ the problem has a
unique solution for all $y_{1}\in\mathbb{R}^{-}$, and whenever $\binom{u_{0}
}{u_{1}}\in X^{C}$ the solution exists globally in $y_{1}\in\mathbb{R}$. In
each of these cases, the map $y_{1}\rightarrow u(x,y_{1},y^{\prime})$ is $C^{1}$.
\end{theorem}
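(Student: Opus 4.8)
The plan is to carry out the entire argument on the Fourier side, where under the transform of equation (\ref{ultra_cauchy}) the evolution decouples into the scalar second-order ODEs $\partial_{y_1}^2\hat u=-(\lvert\xi\rvert^2-\lvert\eta'\rvert^2)\hat u$, one for each frozen frequency $(\xi,\eta')$, whose solutions are exactly the cosine/sine and cosh/sinh formulas displayed above. Uniqueness is then immediate: for fixed $(\xi,\eta')$ this ODE with prescribed $\hat u(0)=\hat u_0$ and $\partial_{y_1}\hat u(0)=\hat u_1$ has a unique solution, so any two elements of $X$ solving the Cauchy problem with identical data have the same Fourier transform for almost every $(\xi,\eta')$, hence coincide. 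The substance of the theorem is therefore existence, namely that the pair $U(y_1):=\binom{u(\cdot,y_1,\cdot)}{\partial_{y_1}u(\cdot,y_1,\cdot)}$ produced by these formulas remains in $X$ over the asserted range of $y_1$, together with its $C^1$ dependence on $y_1$.

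For existence I would split frequency space into the dispersive region $\{\lvert\eta'\rvert\le\lvert\xi\rvert\}$ and the hyperbolic region $\{\lvert\xi\rvert<\lvert\eta'\rvert\}$ and track the $X$-norm density in each. In the dispersive region the explicit formulas show that $(\omega\hat u(y_1),\partial_{y_1}\hat u(y_1))$ is obtained from $(\omega\hat u_0,\hat u_1)$ by a rotation through the angle $\omega y_1$, so the density $\omega^2\lvert\hat u(y_1)\rvert^2+\lvert\partial_{y_1}\hat u(y_1)\rvert^2$ equals its value $\omega^2\lvert\hat u_0\rvert^2+\lvert\hat u_1\rvert^2$ at $y_1=0$ and the dispersive contribution to $\lVert U(y_1)\rVert_X^2$ is exactly conserved. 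In the hyperbolic region I would rewrite $\cosh,\sinh$ through $e^{\pm\lambda y_1}$ to present the solution as $\tfrac12(\hat u_0+\hat u_1/\lambda)\,e^{\lambda y_1}+\tfrac12(\hat u_0-\hat u_1/\lambda)\,e^{-\lambda y_1}$; the constraint cutting out $X^S$ is precisely the vanishing of the coefficient of the growing exponential, so that $\hat u(y_1)=\hat u_0\,e^{-\lambda y_1}$ and the density becomes $2\lambda^2\lvert\hat u_0\rvert^2e^{-2\lambda y_1}$, bounded for $y_1\ge0$ by its value $2\lambda^2\lvert\hat u_0\rvert^2$ at $y_1=0$. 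Summing the two regions and using Plancherel gives the a priori bound $\lVert U(y_1)\rVert_X\le\lVert U(0)\rVert_X$ for all $y_1\ge0$, which is what confines the solution to $X$. The case $X^U$ with $y_1\le0$ is identical with the roles of $e^{\lambda y_1}$ and $e^{-\lambda y_1}$ interchanged, and the case $X^C$ is the statement that once the hyperbolic region is absent from the support only the conserved dispersive part survives, giving a global solution with $\lVert U(y_1)\rVert_X=\lVert U(0)\rVert_X$.

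For the $C^1$ claim I would observe that $\cos(\omega y_1)$, $\sin(\omega y_1)/\omega$ and their hyperbolic counterparts are all restrictions of a single entire function of $s=\lvert\xi\rvert^2-\lvert\eta'\rvert^2$, so the apparent singularity at the light cone $\{\lvert\xi\rvert=\lvert\eta'\rvert\}$ is removable and $y_1\mapsto\hat u(\xi,y_1,\eta')$ is smooth for every frozen frequency with no degeneration as $\omega,\lambda\to0$. Continuity of $y_1\mapsto U(y_1)$ into $X$ then follows from dominated convergence with the conserved/nonincreasing $y_1=0$ density as majorant; writing the difference quotient of the first component as $\partial_{y_1}\hat u$ evaluated at an intermediate $y_1$ (mean value theorem) and dominating $\lvert\partial_{y_1}\hat u\rvert^2$ by the same majorant identifies its $L^2$-limit with the second component $\partial_{y_1}u$, which is the asserted $C^1$ regularity of $y_1\mapsto u$.

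The step I expect to be the main obstacle is the energy bookkeeping in the hyperbolic region together with the uniformity of the domination near the light cone: one must verify that the $X^S$ constraint annihilates the growing exponential throughout the hyperbolic region rather than merely on part of it, and that the $y_1=0$ density furnishes a genuine integrable majorant for the evolved density and for the difference quotients uniformly on compact $y_1$-intervals, precisely where $\omega$ and $\lambda$ both vanish and the two regimes must be matched.
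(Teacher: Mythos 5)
Your proposal is correct and follows essentially the same route as the paper: Fourier synthesis of the propagator, exact conservation of the $X$-norm density on $\{|\eta'|\le|\xi|\}$ via the rotation structure, and on $\{|\xi|<|\eta'|\}$ the observation that the $X^{S}$ constraint $\lambda\hat u_0+\hat u_1=0$ annihilates the $e^{\lambda y_1}$ mode so that only the decaying exponential survives for $y_1\ge 0$, with $X^{U}$ by time reversal and $X^{C}$ as the intersection. Your added remarks on removability of the apparent singularity at $|\xi|=|\eta'|$ and the dominated-convergence argument for the $C^1$ claim supply detail the paper leaves implicit, but do not change the argument.
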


Denote the propagators by $\Phi^{S},\Phi^{U}$ and $\Phi^{C}$ for data in the
respective subspaces. \ These solutions are continuous with respect to their
Cauchy data taken in the respective subspaces. This is the result of the next theorem.

\begin{theorem}
\label{thm3}
Given two phase space points $u=\binom{u_{0}}{u_{1}},v=\binom{v_{0}}{v_{1}}\in
X^{S}$, then for $y_{1}\geq0,$
\begin{equation}
\left\Vert \Phi_{y_{1}}^{S}(u)-\Phi_{y_{1}}^{S}(v)\right\Vert _{X}^{2}
\leq\left\Vert u-v\right\Vert _{X}^{2} ~. \label{boundS}
\end{equation}
The analogous estimate holds for $u,v\in X^{U}$, for $y_{1}\leq0$:
\begin{equation}
\left\Vert \Phi_{y_{1}}^{U}(u)-\Phi_{y_{1}}^{U}(v)\right\Vert _{X}^{2}
\leq\left\Vert u-v\right\Vert _{X}^{2} ~. \label{boundU}
\end{equation}
For $u\in X^{C}$, $\Phi_{y_{1}}^{C}=\Phi_{y_{1}}^{S}$ for $y_{1}\geq0$ and
$\Phi_{y_{1}}^{C}=\Phi_{y_{1}}^{U}$ for $y_{1}\leq0$, and equality holds in
both (\ref{boundS}) and (\ref{boundU}).
\end{theorem}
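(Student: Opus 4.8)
The plan is to exploit the linearity of the propagator together with the explicit Fourier representation of the solution, reducing the estimate to a pointwise (in $(\xi,\eta')$) identity for the symbol of the flow. Since $\Phi_{y_1}^{S}$ is linear, $\Phi_{y_1}^{S}(u)-\Phi_{y_1}^{S}(v)=\Phi_{y_1}^{S}(w)$ with $w=u-v\in X^{S}$, so it suffices to prove $\|\Phi_{y_1}^{S}(w)\|_{X}^{2}\leq\|w\|_{X}^{2}$ for every $w\in X^{S}$ and $y_{1}\geq0$. I would then split the defining integral for $\|\cdot\|_{X}^{2}$ into the dispersive region $\{|\eta'|\leq|\xi|\}$ and the unstable region $\{|\xi|<|\eta'|\}$ and treat each separately; the hypersurface $\{|\xi|=|\eta'|\}$ carries no Lebesgue measure and may be ignored. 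Write $(\hat{a}_{0},\hat{a}_{1})$ for the Fourier transform of $\Phi_{y_1}^{S}(w)$.

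In the dispersive region, differentiating the cosine/sine solution formula in $y_{1}$ gives $\hat{a}_{0}=\cos(\omega y_{1})\hat{w}_{0}+\omega^{-1}\sin(\omega y_{1})\hat{w}_{1}$ and $\hat{a}_{1}=-\omega\sin(\omega y_{1})\hat{w}_{0}+\cos(\omega y_{1})\hat{w}_{1}$. The key observation is that, in the variables $(\omega\hat{w}_{0},\hat{w}_{1})$, the pair $(\omega\hat{a}_{0},\hat{a}_{1})$ is obtained by a rotation through angle $-\omega y_{1}$, so that $\omega^{2}|\hat{a}_{0}|^{2}+|\hat{a}_{1}|^{2}=\omega^{2}|\hat{w}_{0}|^{2}+|\hat{w}_{1}|^{2}$ pointwise. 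Hence the dispersive contribution to the $X$-norm is exactly conserved for every $y_{1}$.

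In the unstable region, membership $w\in X^{S}$ forces the constraint $\hat{w}_{1}=-\lambda\hat{w}_{0}$. Substituting this into the $\cosh/\sinh$ formula and its $y_{1}$-derivative collapses the propagated data to $\hat{a}_{0}=e^{-\lambda y_{1}}\hat{w}_{0}$ and $\hat{a}_{1}=-\lambda e^{-\lambda y_{1}}\hat{w}_{0}$; in particular the constraint $\hat{a}_{1}=-\lambda\hat{a}_{0}$ is preserved, so $\Phi_{y_1}^{S}(w)$ again lies in $X^{S}$ and the region-by-region splitting of the norm remains valid for the image data. The unstable contribution is therefore $2\lambda^{2}e^{-2\lambda y_{1}}|\hat{w}_{0}|^{2}$, to be compared with $\lambda^{2}|\hat{w}_{0}|^{2}+|\hat{w}_{1}|^{2}=2\lambda^{2}|\hat{w}_{0}|^{2}$ in $\|w\|_{X}^{2}$. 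Since $\lambda\geq0$ and $y_{1}\geq0$ we have $e^{-2\lambda y_{1}}\leq1$, so this contribution can only decrease. Summing the two regions yields \eqref{boundS}.

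The estimate \eqref{boundU} follows from the identical computation using the constraint $\hat{w}_{1}=\lambda\hat{w}_{0}$ that defines $X^{U}$, which replaces the decaying factor by $e^{\lambda y_{1}}$; this satisfies $e^{2\lambda y_{1}}\leq1$ precisely when $y_{1}\leq0$. For $w\in X^{C}$ the Fourier support lies in $\{|\xi|>|\eta'|\}$, so the unstable region contributes nothing and only the norm-preserving rotation survives, giving equality in both \eqref{boundS} and \eqref{boundU}; the identification of $\Phi_{y_1}^{C}$ with $\Phi_{y_1}^{S}$ for $y_{1}\geq0$ and with $\Phi_{y_1}^{U}$ for $y_{1}\leq0$ is then immediate from the formulas. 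I do not expect a genuine obstacle; the one point deserving care is confirming that the subspace constraint is invariant under the flow, so that the splitting of $\|\cdot\|_{X}^{2}$ applies equally to the propagated data, and the explicit collapse to pure exponentials makes this transparent.
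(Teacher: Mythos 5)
Your proposal is correct and follows essentially the same route as the paper: Fourier diagonalization, splitting the $X$-norm over $\{|\eta'|\le|\xi|\}$ where the propagator acts as a norm-preserving rotation, and over $\{|\xi|<|\eta'|\}$ where the $X^{S}$ constraint $\hat{w}_{1}=-\lambda\hat{w}_{0}$ annihilates the $e^{\lambda y_{1}}$ mode and leaves only the decaying factor $e^{-2\lambda y_{1}}\le 1$. The only cosmetic differences are that you substitute the constraint to collapse the propagated data to pure exponentials (the paper instead writes the $\cosh/\sinh$ propagator as a sum of $e^{\pm\lambda y_{1}}$ projections and identifies $X^{S}$ as the null space of the growing one) and that you recompute the $X^{U}$ case directly rather than invoking the time reversal $y_{1}\mapsto -y_{1}$.
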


\begin{proof}
It suffices in Theorem \ref{thm3} to prove the first statement. In $X^{S}$ the solution
has two components, distinguished by their Fourier support. Consider first
$\binom{u_{0}}{u_{1}}$ such that $\mathrm{supp}(\hat{u}_{0},\hat{u}%
_{1})\subseteq\left\{  \left\vert \xi\right\vert \geq\left\vert \eta^{\prime
}\right\vert \right\}  := R_{1} $, which gives the center component of the
evolution. The propagator is expressed%
\[
\mathcal{F}\Phi_{y_{1}}^{S}\binom{u_{0}}{u_{1}}=\left(
\begin{array}
[c]{cc}%
\cos(\omega y_{1}) & \frac{\sin(\omega y_{1})}{\omega}\\
-\omega\sin(\omega y_{1}) & \cos(\omega y_{1})
\end{array}
\right)  \binom{\hat{u}_{0}}{\hat{u}_{1}}%
\]
where $\omega=\omega(\xi,\eta^{\prime})$
is the dispersion relation \eqref{dispersion}.
Evaluating this in the energy norm,%
\begin{align}
\left\Vert \Phi_{y_{1}}^{S}\binom{u_{0}}{u_{1}}\right\Vert _{X}^{2}  &
=\iint\left\vert \cos(\omega y_{1})\hat{u}_{0}+\frac{\sin(\omega y_{1}%
)}{\omega}\hat{u}_{1}\right\vert ^{2}\omega^{2}\label{E conserve}\\
&  +\left\vert -\omega\sin(\omega y_{1})\hat{u}_{0}+\cos(\omega y_{1})\hat
{u}_{1}\right\vert ^{2}d\xi d\eta^{\prime}\nonumber\\
&  =\iint(\left\vert \hat{u}_{0}\right\vert ^{2}\omega^{2}+\left\vert \hat
{u}_{1}\right\vert ^{2})d\xi d\eta^{\prime}\nonumber\\
&  =\left\Vert \dbinom{u_{0}}{u_{1}}\right\Vert _{X}^{2}\text{ .}\nonumber
\end{align}
The propagator on the complementary space is more sensitive. Let us suppose
that $\mathrm{supp}(\hat{u}_{0},\hat{u}_{1})\subseteq\left\{  \left\vert
\eta^{\prime}\right\vert >\left\vert \xi\right\vert \right\}  $, then
$\lambda(\xi,\eta^{\prime})>0$ and we express the propagator in terms of its
Fourier transform as
\begin{align*}
\mathcal{F}\Phi_{y_{1}}^{S}\binom{u_{0}}{u_{1}}  &  =\left(
\begin{array}
[c]{cc}%
\cosh(\lambda y_{1}) & \frac{\sinh(\lambda y_{1})}{\lambda}\\
\lambda\sinh(\lambda y_{1}) & \cosh(\lambda y_{1})
\end{array}
\right)  \binom{\hat{u}_{0}}{\hat{u}_{1}}\\
&  =\frac{e^{\lambda y_{1}}}{2}\left(
\begin{array}
[c]{cc}%
1 & \frac{1}{\lambda}\\
\lambda & 1
\end{array}
\right)  \binom{\hat{u}_{0}}{\hat{u}_{1}}+\frac{e^{-\lambda y_{1}}}{2}\left(
\begin{array}
[c]{cc}%
1 & -\frac{1}{\lambda}\\
-\lambda & 1
\end{array}
\right)  \binom{\hat{u}_{0}}{\hat{u}_{1}}\text{ .}%
\end{align*}
The subspace $X^{S}$ consists of precisely those data which lie in
the null space of the first term; this is the expression of the constraint
\begin{equation}
\lambda\hat{u}_{0}(\xi,\eta^{\prime})+\hat{u}_{1}(\xi,\eta^{\prime})=0
\label{Constraint}%
\end{equation}
Measuring the remaining term in energy norm, we find
\begin{align*}
\left\Vert \Phi_{y_{1}}^{S}\binom{u_{0}}{u_{1}}\right\Vert _{X}^{2}  &
=\iint\frac{e^{-2\lambda y_{1}}}{4}\left[  \left\vert \hat{u}_{0}-\frac
{\hat{u}_{1}}{\lambda}\right\vert ^{2}\lambda^{2}+\left\vert -\lambda\hat
{u}_{0}+\hat{u}_{1}\right\vert ^{2}\right]  d\xi d\eta^{\prime}\\
&  \leq\iint e^{-2\lambda y_{1}}\left(  \left\vert \hat{u}_{0}\right\vert
^{2}\lambda^{2}+\left\vert \hat{u}_{1}\right\vert ^{2}\right)  d\xi
d\eta^{\prime}\text{ .}%
\end{align*}
Since we consider the propagator $\Phi_{y_{1}}^{S}$ for $y_{1}\geq0$, the
exponent $-\lambda y_{1}$ is negative, and therefore%
\[
\left\Vert \Phi_{y_{1}}^{S}\binom{u_{0}}{u_{1}}\right\Vert _{X}^{2}%
\leq\left\Vert \binom{u_{0}}{u_{1}}\right\Vert _{X}^{2}%
\]
for $u=\dbinom{u_{0}}{u_{1}}\in X^{S}$. For general data in $X^{S}$, one
decomposes it into its components with support in $\left\{  \left\vert
\xi\right\vert \geq\left\vert \eta^{\prime}\right\vert \right\}  $ for which
we use (\ref{E conserve}), and its component supported in $\left\{  \left\vert
\eta^{\prime}\right\vert >\left\vert \xi\right\vert \right\}  $, which in
addition satisfies the constraint (\ref{Constraint}). Therefore on $X^{S}$%
\[
\left\Vert \Phi_{y_{1}}^{S}(u)\right\Vert _{X}^{2}\leq\left\Vert u\right\Vert
_{X}^{2}\text{ .}%
\]
Bounded operators on $X^{S}$ are continuous with respect to $u\in X^{S}$, and
it is easy to see that the solution behaves continuously with respect to
$y_{1}\geq0$ as well. The case for the subspace $X^{U}$ is proved by the same
arguments, after reversing time $y_{1}\longrightarrow-y_{1}$. This proves
Theorems \ref{thm2} and \ref{thm3}. We remark that on the center subspace $X^{C}$, which yields global solutions, both
constraints are imposed%
\begin{equation}
\lambda\hat{u}_{0}(\xi,\eta^{\prime})\pm\hat{u}_{1}(\xi,\eta^{\prime})=0\text{
,}%
\end{equation}
implying that $\hat{u}_{0}(\xi,\eta^{\prime})=0=\hat{u}_{1}(\xi,\eta^{\prime
})$ for all $\left\{  \left\vert \xi\right\vert \leq\left\vert \eta^{\prime
}\right\vert \right\}  $.
\end{proof}

The proof extends to the initial value problem posed in higher energy spaces,
defined by%
\[
\left\Vert v\right\Vert _{X^{m}}^{2}:=\sum\limits_{\left\vert \alpha
\right\vert +\left\vert \beta\right\vert \leq m}\left\Vert \partial
_{x}^{\alpha}\partial_{y^{\prime}}^{\beta}v\right\Vert _{X}^{2}\text{ .}
\]
We then have

\begin{corollary}
The higher energy space $X^{m}$ decomposes into three subspaces, $X^{m,S},$
$X^{m,U}$ and $X^{m,C}=X^{m,S}\cap X^{m,U}$ such that for $u,v\in X^{m,S}$ and
$y_{1}\geq0$%
\[
\left\Vert \Phi_{y_{1}}^{S}(u)-\Phi_{y_{1}}^{S}(v)\right\Vert _{X^{m}}^{2}%
\leq\left\Vert u-v\right\Vert _{X^{m}}^{2}\text{ ,}
\]
while for $y_{1}\leq0$ and $u,v\in X^{m,U}$,%
\[
\left\Vert \Phi_{y_{1}}^{U}(u)-\Phi_{y_{1}}^{U}(v)\right\Vert _{X^{m}}^{2}%
\leq\left\Vert u-v\right\Vert _{X^{m}}^{2}\text{ .}
\]
For $u,v\in X^{m,C}$ both estimates hold, and a global solution exists which
has properties of higher Sobolev regularity. When $m>((d_{1}+(d_{2}-1))/2)+2$
then such solutions are known to be classical $C^{2}$ solutions by the Sobolev
embedding theorem.
\end{corollary}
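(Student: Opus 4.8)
The plan is to reduce the corollary directly to Theorems~\ref{thm2} and \ref{thm3} by observing that the higher energy norm $\|\cdot\|_{X^m}$ is assembled out of copies of the base norm $\|\cdot\|_X$ applied to the derivatives $\partial_x^\alpha\partial_{y'}^\beta v$. The decisive structural fact is that the propagator is a Fourier multiplier: in the Fourier variables $(\xi,\eta')$ it acts by multiplication by a matrix depending only on $(\xi,\eta')$, and $\partial_x^\alpha\partial_{y'}^\beta$ also becomes multiplication by $(i\xi)^\alpha(i\eta')^\beta$. Multiplication operators commute, so $\partial_x^\alpha\partial_{y'}^\beta$ commutes with $\Phi^S_{y_1}$, $\Phi^U_{y_1}$ and $\Phi^C_{y_1}$ as operators on the Fourier side. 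This is the single observation that makes everything go through, and it is where I expect to spend the most care, since one must check that the commutation and the decomposition respect the constraint defining $X^S$ (the constraint $\lambda\hat u_0+\hat u_1=0$ is multiplied by the same symbol $(i\xi)^\alpha(i\eta')^\beta$, so it is preserved).

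First I would define the subspaces $X^{m,S}$, $X^{m,U}$, $X^{m,C}$ as those $v\in X^m$ for which $\partial_x^\alpha\partial_{y'}^\beta v$ lies in $X^S$, $X^U$, $X^C$ respectively for all $|\alpha|+|\beta|\le m$; equivalently, since the defining constraints are pointwise in $(\xi,\eta')$ and scale-invariant under multiplication by $(i\xi)^\alpha(i\eta')^\beta$, these are characterized by the very same constraints \eqref{Constraint} on the Fourier transform of $v$ itself. I would note that $X^{m,C}=X^{m,S}\cap X^{m,U}$ follows exactly as in the base case, since imposing both sign choices of the constraint forces $\hat v_0=\hat v_1=0$ on $\{|\xi|\le|\eta'|\}$.

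Next I would establish the contraction estimate. By linearity of $\Phi^S_{y_1}$ it suffices to bound $\|\Phi^S_{y_1}(w)\|_{X^m}$ for $w=u-v$. Using the commutation just described,
\[
\bigl\|\Phi^S_{y_1}(w)\bigr\|_{X^m}^2
=\sum_{|\alpha|+|\beta|\le m}\bigl\|\partial_x^\alpha\partial_{y'}^\beta\,\Phi^S_{y_1}(w)\bigr\|_X^2
=\sum_{|\alpha|+|\beta|\le m}\bigl\|\Phi^S_{y_1}\bigl(\partial_x^\alpha\partial_{y'}^\beta w\bigr)\bigr\|_X^2 .
\]
Each summand is controlled by the base estimate \eqref{boundS} of Theorem~\ref{thm3}, provided $\partial_x^\alpha\partial_{y'}^\beta w\in X^S$, which holds by the definition of $X^{m,S}$. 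Summing over all multi-indices yields $\|\Phi^S_{y_1}(w)\|_{X^m}^2\le\sum_{|\alpha|+|\beta|\le m}\|\partial_x^\alpha\partial_{y'}^\beta w\|_X^2=\|w\|_{X^m}^2$, which is the claimed contraction; the $X^U$ case follows identically after the time reversal $y_1\mapsto -y_1$, and on $X^{m,C}$ both hold.

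Finally, for the regularity assertion I would invoke Theorem~\ref{thm2} together with the contraction just proved to get a global $X^m$-valued solution for data in $X^{m,C}$, and then pass to classical regularity by Sobolev embedding. The point is that finiteness of $\|v\|_{X^m}$ controls $\|\partial_x^\alpha\partial_{y'}^\beta v\|_{H^1}$-type quantities up to order $m$ in the $d=d_1+(d_2-1)$ variables $(x,y')$, while the evolution equation \eqref{ultra_cauchy} trades two $y_1$-derivatives for spatial ones, so that $u\in C^2$ in all $d+1$ variables once $m$ exceeds $d/2+2$; the stated threshold $m>((d_1+(d_2-1))/2)+2$ is exactly the Sobolev embedding $H^{m}\hookrightarrow C^{2}$ in dimension $d$. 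I expect no essential obstacle here beyond bookkeeping: the entire content is the multiplier commutation of the first paragraph, and the rest is summation over multi-indices and a standard embedding.
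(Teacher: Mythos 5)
Your proposal is correct and is essentially the paper's own argument: the paper simply states that ``the proof extends'' to the higher energy spaces $X^m$, and the reason it extends is exactly the multiplier commutation you identify --- the propagator and $\partial_x^\alpha\partial_{y'}^\beta$ are both Fourier multipliers in $(\xi,\eta')$, the constraint \eqref{Constraint} is preserved under multiplication by $(i\xi)^\alpha(i\eta')^\beta$, and the estimate of Theorem~\ref{thm3} is then applied term by term and summed. Your treatment of the Sobolev embedding threshold in dimension $d_1+(d_2-1)$ also matches the paper's statement.
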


It is natural to estimate solutions with respect to the energy norm; indeed,
it \emph{is }the energy when restricted to the center subspace $X^{C}$. Thus
the problem is well-posed in the following sense:\ data in $X^{S}$
continuously propagates to all $y_{1}\in\mathbb{R}^{+}$, data in $X^{U}$
continuously propagates to all times $y_{1}\in\mathbb{R}^{-}$, and data in
$X^{C}$, which constitute an infinite-dimensional Hilbert space, are defined
globally in time. In the case of the ordinary wave equation ($d_{1}=1$),
solutions in $X^{C}$ correspond to the full energy space $H^{1} \times L^{2}$.

\section{The initial value problem in higher codimension}

In the presence of multiple time dimensions, spacelike hypersurfaces are
necessarily of higher codimension. Therefore one might consider the initial
value problem with data posed on a hypersurface of codimension greater than or
equal to two. Such problems are generally ill-posed. Indeed, we show that
solutions can be singular for standard classes of data. Moreover, even
imposing the constraint discussed in section 2, which is the requirement of
global existence, smooth solutions are highly non-unique. The purpose of this
section is to study the extension problem of data posed on a non-degenerate
higher codimension hypersurface $M$ to Cauchy data on a codimension one
hypersurface $N$. There is a lot of freedom in choosing this extension, even
under the constraint equation (\ref{Constraint}) on the resulting Cauchy data.
Other extensions can be chosen to fail to satisfy the constraint. Thus the
initial value problem fails to be well-posed in several ways: resulting
solutions may be singular, or they may be selected to satisfy the constraint
and be regular for all $y_{1}\in\mathbb{R}$, however they will not be unique.

\subsection{Codimension 2 to codimension 1 in $\mathbb{R}^{3}$}

Our analysis is illustrated in the example case of $M=\{y_{1}=y_{2}=0\}$ and
$N=\{y_{1}=0\}$ subspaces of $\mathbb{R}^{3}$. We suppose that initial data
for a solution $u(x,y)$ is given on $M$ in the form
\[
w(x_{1})=(w_{0}(x_{1}),w_{10}(x_{1}),w_{01}(x_{1}))
\]
where $w_{0}(x_{1})=u(x_{1},0)$, $w_{10}(x_{1})=\partial_{y_{1}}u(x_{1},0)$
and $w_{01}(x_{1})=\partial_{y_{2}}u(x_{1},0)$, corresponding to the values of
the solution and its normal derivatives on $M$. The object is to extend
$w(x_{1})$ to Cauchy data $(u_{0}(x_{1},y_{2}),u_{1}(x_{1},y_{2})$ on $N$
which satisfies%
\begin{align*}
u_{0}(x_{1},0)  &  =w_{0}(x_{1})\\
u_{1}(x_{1},0)  &  =w_{10}(x_{1})
\end{align*}
and the compatibility condition%
\[
\partial_{y_{2}}u_{0}(x_{1},0)=w_{01}(x_{1})\text{.}%
\]
We give extensions which satisfy the constraint (\ref{Constraint}), therefore
giving rise to global solutions in $y_{1}\in\mathbb{R}$. Such extensions are
nonunique. Additionally, there are extensions which fail to satisfy the
constraint, lying in $X^{S}\backslash X^{C}$ or $X^{U}\backslash X^{C}$ or neither.

\begin{definition}
The extension operator is given by%
\[
E(w)(x_{1},y_{2}):=\frac{1}{2\pi}\iint e^{i(\xi x_{1}+\eta^{\prime}y_{2})}
\hat{w}(\xi)\chi(\xi,\eta^{\prime})d\eta^{\prime}d\xi
\]
where the kernel function $\chi(\xi,\eta^{\prime})$ is chosen such that for
all $\xi$,%
\[
\frac{1}{\sqrt{2\pi}}\int\chi(\xi,\eta^{\prime})d\eta^{\prime}=1\text{.}
\]
In order to satisfy the constraint, we ask additionally that
$\mathrm{supp(}
\chi(\xi,\eta^{\prime}))\subseteq\{\left\vert \eta^{\prime}\right\vert
<\left\vert \xi\right\vert \}$. A reasonable choice is to take%
\[
\chi(\xi,\eta^{\prime})=\psi(\eta^{\prime}/\left\vert \xi\right\vert )\frac
{1}{\left\vert \xi\right\vert }\text{,}%
\]
for $\psi(\theta)\in C_{0}^{\infty}([-1,1])$, $\psi(\theta)\geq0$ even, and
\begin{equation}
\frac{1}{\sqrt{2\pi}}\int_{-1}^{1}\psi(\theta)d\theta=1. \label{Normalization}%
\end{equation}
\end{definition}

\begin{theorem}
\label{thm4}
The extension operator $E$ is a bounded operator on the following space of
functions:
\begin{eqnarray*}
  &&  E:\dot{H}^{-1/2}(M)\rightarrow L^{2}(N)  \\
  &&  \quad \ (\dot{H}^{-1/2}\cap H^{m-1/2})(M)\rightarrow H^{m}(N) ~.
\end{eqnarray*}
In addition, when $w\in\dot{H}^{-3/2}(M)$ then $y_{2}E(w)\in L^{2}(N)$ and
furthermore
\begin{eqnarray*}
  && y_{2}E:\dot{H}^{m-3/2}(M)\rightarrow H^{m}(N) \text{.}
\end{eqnarray*}
\end{theorem}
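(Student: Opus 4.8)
The plan is to recognize that $E$ is a Fourier multiplier, so that every assertion reduces to a weighted $L^2$ estimate on the Fourier side, after which a single scaling computation does most of the work. Here $w$ is a function of $x_1$ alone, so $\hat w(\xi)$ is a one–dimensional transform, while $E(w)$ lives on the two–dimensional $N$; this codimension drop is what produces the gains in powers of $|\xi|$. Writing the two–dimensional Fourier transform in the variables $(x_1,y_2)\mapsto(\xi,\eta')$, the definition of $E$ gives directly
\[
\widehat{E(w)}(\xi,\eta')=\hat w(\xi)\,\chi(\xi,\eta'),
\]
so by Plancherel's theorem the $L^2(N)$ and $H^m(N)$ norms of $E(w)$ become weighted integrals of $|\hat w(\xi)|^2\,|\chi(\xi,\eta')|^2$. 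The one fact I would establish first is the scaling identity: substituting $\eta'=|\xi|\theta$ into $\chi(\xi,\eta')=\psi(\eta'/|\xi|)/|\xi|$,
\[
\int_{\mathbb{R}}|\chi(\xi,\eta')|^2\,d\eta'=\frac{\|\psi\|_{L^2}^2}{|\xi|}.
\]

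With this in hand the first mapping property is immediate: Plancherel together with the scaling identity yields
\[
\|E(w)\|_{L^2(N)}^2=\|\psi\|_{L^2}^2\int_{\mathbb{R}}\frac{|\hat w(\xi)|^2}{|\xi|}\,d\xi=\|\psi\|_{L^2}^2\,\|w\|_{\dot H^{-1/2}(M)}^2,
\]
which is the bound $E:\dot H^{-1/2}(M)\to L^2(N)$. For the $H^m$ estimate I would exploit the support condition $\mathrm{supp}\,\chi\subseteq\{|\eta'|<|\xi|\}$, which forces $(1+|\xi|^2+|\eta'|^2)^m\le 2^m(1+|\xi|^2)^m$ on the relevant region; integrating out $\eta'$ by the scaling identity then reduces matters to the one–variable weight comparison
\[
\frac{(1+|\xi|^2)^m}{|\xi|}\le 2^m|\xi|^{-1}\mathbf 1_{\{|\xi|<1\}}+\sqrt2\,(1+|\xi|^2)^{m-1/2}\mathbf 1_{\{|\xi|\ge1\}},
\]
whose right side is precisely the weight of the intersection norm $\dot H^{-1/2}\cap H^{m-1/2}$.

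For the statements involving $y_2$, the key observation is that multiplication by $y_2$ corresponds to $\partial_{\eta'}$ on the Fourier side, so
\[
\widehat{y_2E(w)}(\xi,\eta')=i\,\hat w(\xi)\,\partial_{\eta'}\chi(\xi,\eta'),\qquad \partial_{\eta'}\chi(\xi,\eta')=\frac{\psi'(\eta'/|\xi|)}{|\xi|^2}.
\]
The same substitution now gives $\int|\partial_{\eta'}\chi|^2\,d\eta'=\|\psi'\|_{L^2}^2/|\xi|^3$, so that $\|y_2E(w)\|_{L^2(N)}^2=\|\psi'\|_{L^2}^2\,\|w\|_{\dot H^{-3/2}(M)}^2$, giving $y_2E(w)\in L^2(N)$ for $w\in\dot H^{-3/2}$. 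The $H^m$ bound for $y_2E$ follows as for $E$: the support condition again replaces the two–dimensional weight by $2^m(1+|\xi|^2)^m$, leaving the one–variable weight $(1+|\xi|^2)^m/|\xi|^3$ to be compared against a one–dimensional Sobolev weight.

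The main obstacle I anticipate is the low–frequency behaviour in this last step. At high frequency $(1+|\xi|^2)^m/|\xi|^3\sim|\xi|^{2m-3}$ matches the homogeneous weight of $\dot H^{m-3/2}$ exactly, but as $|\xi|\to0$ the weight behaves like $|\xi|^{-3}$, which for $m>0$ is not controlled by $\dot H^{m-3/2}$ alone; it is controlled precisely by the $\dot H^{-3/2}$ norm appearing in the preceding claim. I would therefore split the frequency integral at $|\xi|=1$, bounding the high part by $\dot H^{m-3/2}$ and the low part by $\dot H^{-3/2}$, so that the natural domain for this mapping is the intersection $\dot H^{-3/2}\cap H^{m-3/2}(M)$, paralleling the structure of the $E$ estimate. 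The remaining points are routine and I would dispatch them quickly: the multiplier identity and the interchange of integrations are justified by density of $C_0^\infty$ and Tonelli's theorem, since all integrands are nonnegative, and the constants depend only on $\|\psi\|_{L^2}$, $\|\psi'\|_{L^2}$, and $m$.
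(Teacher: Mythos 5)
Your proposal is correct and follows essentially the same route as the paper: Plancherel reduces everything to weighted integrals of $|\hat w(\xi)|^2|\chi(\xi,\eta')|^2$, the substitution $\theta=\eta'/|\xi|$ produces the factors $|\xi|^{-1}$ and $|\xi|^{-3}$, and multiplication by $y_2$ becomes $\partial_{\eta'}$ acting on $\chi$. Your remark that the final mapping $y_2E$ into $H^m(N)$ really requires the intersection $\dot H^{-3/2}\cap \dot H^{m-3/2}$ to control the low-frequency weight $|\xi|^{-3}$ is a legitimate refinement of a point the paper's statement and its one-line ``the calculations are similar'' gloss over.
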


Using the extension operator, we generate constraint-satisfying Cauchy data on
$N$ from initial data on $M$ as follows:%
\begin{align*}
u_{0}(x_{1},y_{2}) &  :=E(w_{0})(x_{1},y_{2})+y_{2}E(w_{01})(x_{1},y_{2})\\
u_{1}(x_{1},y_{2}) &  :=E(w_{10})(x_{1},y_{2})\text{.}%
\end{align*}
Checking that this is a legitimate choice, we have
\begin{align*}
u_{1}(x_{1},0) &  =\frac{1}{2\pi}\iint e^{i\xi x_{1}}\hat{w}_{10}(\xi)\chi
(\xi,\eta^{\prime})d\xi d\eta^{\prime}\\
&  =\frac{1}{\sqrt{2\pi}}\int e^{i\xi x_{1}}\hat{w}_{10}(\xi)\left[  \frac
{1}{\sqrt{2\pi}}\int\psi(\eta^{\prime}/\left\vert \xi\right\vert )\frac
{1}{\left\vert \xi\right\vert }d\eta^{\prime}\right]  d\xi\\
&  =\frac{1}{\sqrt{2\pi}}\int e^{i\xi x_{1}}\hat{w}_{10}(\xi)\left[  \frac
{1}{\sqrt{2\pi}}\int\psi(\theta)d\theta\right]  d\xi\\
&  =w_{10}(x_{1})
\end{align*}
because of the normalization (eqn \ref{Normalization}) of $\psi$.\ Similarly,%
\[
u_{0}(x_{1},0)=E(w_{0})(x_{1},0)=w_{0}(x_{1}).
\]
The compatibility condition is satisfied, since%
\begin{align*}
\partial_{y_{2}}u_{0}(x_{1},0) &  =\partial_{y_{2}}E(w_{0})(x_{1}%
,0)+E(w_{0})(x_{1},0)\\
&  =\partial_{y_{2}}E(w_{0})(x_{1},0)+w_{01}(x_{1})\text{.}%
\end{align*}
The first term of the RHS vanishes because
\begin{align*}
\partial_{y_{2}}E(w_{0})(x_{1},0) &  =\frac{1}{2\pi}\iint e^{i\xi x_{1}}%
i\eta^{\prime}\hat{w}_{0}(\xi)\chi(\xi,\eta^{\prime})d\xi d\eta^{\prime}\\
&  =\frac{1}{\sqrt{2\pi}}\int ie^{i\xi x_{1}}\hat{w}_{0}(\xi)\left[  \frac
{1}{\sqrt{2\pi}}\int\eta^{\prime}\psi(\eta^{\prime}/\left\vert \xi\right\vert
)\frac{1}{\left\vert \xi\right\vert }d\eta^{\prime}\right]  d\xi\\
&  =0\text{,}%
\end{align*}
where we have used that $\int\theta\psi(\theta)d\theta=0$ because $\psi
(\cdot)$ has been chosen to be even$.$

The pair of functions $(u_{0}(x_{1},y_{2}),u_{1}(x_{1},y_{2}))$ gives Cauchy
data for the codimension one problem that is discussed in Section 2. Because
of the properties of the extension, it satisfies the constraint conditions of
$X^{C}$ for solutions which are globally defined in $y_{1}$. In order to apply
the existence theorem, the energy norm of this Cauchy data must be finite.

\begin{theorem}
\label{thm5}
Suppose that $w_{0}\in\dot{H}^{1/2}(M)$, $w_{01}\in\dot{H}^{-1/2}$ and
$w_{10}\in\dot{H}^{-1/2}$. Then the energy norm of the extension
$u_{0}=E(w_{0})(x_{1},y_{2})+y_{2}E(w_{01})(x_{1},y_{2})$, $u_{1}(x_{1}%
,y_{2})=E(w_{10})(x_{1},y_{2})$ is finite:%
\[
\left\Vert (u_{0},u_{1})\right\Vert _{X^{C}}^{2}\leq C(\left\Vert
w_{0}\right\Vert _{\dot{H}^{1/2}}^{2}+\left\Vert w_{01}\right\Vert _{\dot
{H}^{-1/2}}^{2}+\left\Vert w_{10}\right\Vert _{\dot{H}^{-1/2}}^{2})\text{.}
\]
Additionally, the higher energy norms with which one defines the $X^{m}$
topology for $(u_{0},u_{1})$ are also bounded by this extension process,
namely
\[
\left\Vert (u_{0},u_{1})\right\Vert _{X^{m,C}}^{2}\leq C_{m}(\left\Vert
w_{0}\right\Vert _{\dot{H}^{m+1/2}}^{2}+\left\Vert w_{01}\right\Vert _{\dot
{H}^{m-1/2}}^{2}+\left\Vert w_{10}\right\Vert _{\dot{H}^{m-1/2}}^{2})\text{.}
\]
\end{theorem}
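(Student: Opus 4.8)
The plan is to pass everything to the Fourier side in $(x_1,y_2)\mapsto(\xi,\eta')$ and evaluate the two pieces of the $X^{C}$ norm directly. Since the extension $E$ places all Fourier support in the cone $\{|\eta'|<|\xi|\}$, the $\lambda^2$-integral in $\|\cdot\|_X$ drops out, so on $X^{C}$ one has simply $\|(u_0,u_1)\|_{X^C}^2=\iint\omega^2|\hat u_0|^2\,d\xi\,d\eta'+\iint|\hat u_1|^2\,d\xi\,d\eta'$ with $\omega^2=|\xi|^2-|\eta'|^2$. First I would record the symbols of the three building blocks. From the definition of $E$ one has $\widehat{E(w)}(\xi,\eta')=\hat w(\xi)\,\psi(\eta'/|\xi|)/|\xi|$, so $\hat u_1=\hat w_{10}(\xi)\,\psi(\eta'/|\xi|)/|\xi|$. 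For $u_0$ the only subtlety is the factor $y_2$: multiplication by $y_2$ corresponds to $i\partial_{\eta'}$ on the Fourier side, and $\partial_{\eta'}$ falling on $\psi(\eta'/|\xi|)/|\xi|$ produces $\psi'(\eta'/|\xi|)/|\xi|^2$. Hence $\hat u_0=\hat w_0(\xi)\,\psi(\eta'/|\xi|)/|\xi|+i\,\hat w_{01}(\xi)\,\psi'(\eta'/|\xi|)/|\xi|^2$, and both terms remain supported in the cone since $\psi,\psi'\in C_0^\infty([-1,1])$.

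The computational engine is the change of variables $\theta=\eta'/|\xi|$, $d\eta'=|\xi|\,d\theta$, which factorizes every integral into a pure $\xi$-integral (a homogeneous Sobolev norm of the data) times a finite $\theta$-integral of $\psi$, $\psi'$ against the weight coming from $\omega^2$. Writing $\omega^2=|\xi|^2(1-\theta^2)$, the $u_1$ term becomes $\big(\int|\hat w_{10}|^2|\xi|^{-1}\,d\xi\big)\big(\int\psi(\theta)^2\,d\theta\big)=\|\psi\|_{L^2}^2\,\|w_{10}\|_{\dot H^{-1/2}}^2$. The two contributions to the $u_0$ term separate likewise after $|a+b|^2\le 2|a|^2+2|b|^2$: the $E(w_0)$ piece carries an extra factor $\omega^2=|\xi|^2(1-\theta^2)$, which raises $|\xi|^{-1}$ to $|\xi|$ and yields $\|w_0\|_{\dot H^{1/2}}^2$ times $\int(1-\theta^2)\psi^2\,d\theta$; the $y_2E(w_{01})$ piece carries $|\xi|^{-4}$ from its squared symbol and $|\xi|^2$ from $\omega^2$, leaving $|\xi|^{-1}$ and producing $\|w_{01}\|_{\dot H^{-1/2}}^2$ times $\int(1-\theta^2)\psi'^2\,d\theta$. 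Summing gives the first inequality with $C$ depending only on $\psi$.

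For the higher-energy estimate I would insert the weight $P_m(\xi,\eta')=\sum_{|\alpha|+|\beta|\le m}|\xi|^{2\alpha}|\eta'|^{2\beta}$, which is exactly what the extra derivatives $\partial_x^\alpha\partial_{y'}^\beta$ contribute to each copy of $\|\cdot\|_X$, and run the identical change of variables. On the cone, $P_m(\xi,|\xi|\theta)$ is a polynomial in $|\xi|$ of degree $2m$ with $\theta$-dependent, integrable coefficients, so the argument reproduces the base computation with each homogeneous order raised by up to $m$; the top order gives the stated norms $\dot H^{m+1/2}$, $\dot H^{m-1/2}$, $\dot H^{m-1/2}$, and the intermediate orders are absorbed into these endpoints by splitting the frequency integral at $|\xi|=1$.

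The part requiring the most care — and the one genuine obstacle — is the bookkeeping around the $y_2E(w_{01})$ term. One must track that the $\partial_{\eta'}$ generated by the factor $y_2$ lands on $\psi(\eta'/|\xi|)/|\xi|$ and lowers the homogeneity by one power of $|\xi|$; this is precisely what forces $w_{01}$ to be measured in $\dot H^{-1/2}$, one derivative below $w_0$, and explains why the three hypotheses sit at the regularities stated. Everything else is the routine verification that the $\theta$-integrals $\int(1-\theta^2)\theta^{2\beta}\psi^2\,d\theta$ and their $\psi'$-analogues are finite, which holds because $\psi\in C_0^\infty([-1,1])$.
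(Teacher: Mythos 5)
Your proof is correct and follows essentially the same route as the paper's: Plancherel plus the substitution $\theta=\eta'/|\xi|$, which factors each contribution into a finite $\theta$-integral of $\psi$ or $\psi'$ times the appropriate homogeneous Sobolev norm of the data. You are in fact somewhat more explicit than the paper about the $|\xi|^{-2}$ homogeneity loss from $\partial_{\eta'}$ acting on the kernel in the $y_{2}E(w_{01})$ term, and about the higher-order $X^{m,C}$ estimate, which the paper leaves to the reader.
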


\begin{proof}
(of Theorem \ref{thm4}): Using the Plancherel identity, the $L^{2}(N)$ norm of $E(w)$
is%
\begin{align*}
\left\Vert E(w)\right\Vert _{L^{2}(N)}^{2}  &  =\iint\left\vert \hat{w}%
(\xi)\right\vert ^{2}\psi^{2}(\eta^{\prime}/\left\vert \xi\right\vert
)\frac{1}{\left\vert \xi\right\vert ^{2}}d\eta^{\prime}d\xi\\
&  =\int\frac{1}{\left\vert \xi\right\vert }\left\vert \hat{w}(\xi)\right\vert
^{2}\left(  \int\psi^{2}(\eta^{\prime}/\left\vert \xi\right\vert )\frac
{1}{\left\vert \xi\right\vert }d\eta^{\prime}\right)  d\xi\\
&  =\left\Vert \psi\right\Vert _{L^{2}[-1,1]}^{2}\left\Vert w\right\Vert
_{\dot{H}^{-1/2}(M)}^{2}\text{,}%
\end{align*}
since $\theta=\eta^{\prime}/\left\vert \xi\right\vert $ and
\[
\int\psi^{2}(\eta^{\prime}/\left\vert \xi\right\vert )\frac{1}{\left\vert
\xi\right\vert }d\eta^{\prime}=\int\limits_{-1}^{1}\psi^{2}(\theta
)d\theta\text{.}%
\]
The identity extends to the Sobolev space $H^{m}(N)$; it suffices to calculate
$\left\Vert \partial_{x_{1}}^{m}E(w)\right\Vert _{L^{2}}$ and $\left\Vert
\partial_{y_{2}}^{m}E(w)\right\Vert _{L^{2}}$:
\begin{align*}
\left\Vert \partial_{x_{1}}^{m}E(w)\right\Vert _{L^{2}(N)}  &  =\iint
\left\vert \hat{w}(\xi)\right\vert ^{2}\left\vert \xi\right\vert ^{2m}\psi
^{2}(\eta^{\prime}/\left\vert \xi\right\vert )\frac{1}{\left\vert
\xi\right\vert ^{2}}d\eta^{\prime}d\xi     \\
&  =\int\left\vert \hat{w}(\xi)\right\vert ^{2}\left\vert \xi\right\vert
^{2m-1}\left(  \int\psi^{2}(\eta^{\prime}/\left\vert \xi\right\vert )\frac
{1}{\left\vert \xi\right\vert }d\eta^{\prime}\right)  d\xi\\
&  =\left\Vert \psi\right\Vert _{L^{2}[-1,1]}^{2}\left\Vert w\right\Vert
_{\dot{H}^{m-1/2}(M)}^{2}\text{.}%
\end{align*}
The second quantity is similar:%
\begin{align*}
\left\Vert \partial_{y_{2}}^{m}E(w)\right\Vert _{L^{2}(N)}  &  =\iint
\left\vert \hat{w}(\xi)\right\vert ^{2}\left\vert \xi\right\vert ^{2m}\psi
^{2}(\eta^{\prime}/\left\vert \xi\right\vert )\frac{1}{\left\vert
\xi\right\vert ^{2}}\left\vert \eta^{\prime}\right\vert ^{2m}d\eta^{\prime
}d\xi\\
&  =\int\left\vert \hat{w}(\xi)\right\vert ^{2}\left\vert \xi\right\vert
^{2m-1}\left(  \int\psi^{2}(\eta^{\prime}/\left\vert \xi\right\vert
)\left\vert \frac{\eta^{\prime}}{\left\vert \xi\right\vert }\right\vert
^{2m}\frac{1}{\left\vert \xi\right\vert }d\eta^{\prime}\right)  d\xi\\
&  =C_{m}\left\Vert w\right\Vert _{\dot{H}^{m-1/2}(M)}^{2}\text{,}%
\end{align*}
where $C_{m}=\int\limits_{-1}^{1}\theta^{2m}\psi^{2}(\theta)d\theta$. The
third and fourth estimates of the theorem involve $y_{2}E(w)$, whose Fourier
transform is%
\[
\hat{w}(\xi)\frac{1}{i}\partial_{\eta^{\prime}}\chi(\eta^{\prime}/\left\vert
\xi\right\vert )\text{.}%
\]
Measuring the $L^{2}$ norm of $y_{2}E(w)$,
\begin{align*}
\left\Vert y_{2}E(w)\right\Vert _{L^{2}(N)}^{2}  &  =\iint\left\vert \hat
{w}(\xi)\right\vert ^{2}\left\vert \frac{1}{i}\partial_{\theta}\psi
(\eta^{\prime}/\left\vert \xi\right\vert )\frac{1}{\left\vert \xi\right\vert
^{2}}\right\vert ^{2}d\eta^{\prime}d\xi\\
&  =\int\left\vert \hat{w}(\xi)\right\vert ^{2}\frac{1}{\left\vert
\xi\right\vert ^{3}}\left(  \int\left\vert \partial_{\theta}\psi(\eta^{\prime
}/\left\vert \xi\right\vert )\right\vert ^{2}\frac{1}{\left\vert
\xi\right\vert }d\eta^{\prime}\right)  d\xi\\
&  =\int\left\vert \hat{w}(\xi)\right\vert ^{2}\frac{1}{\left\vert
\xi\right\vert ^{3}}d\xi\left(  \int\limits_{-1}^{1}\left\vert \partial
_{\theta}\psi\right\vert ^{2}d\theta\right) \\
&  =C\left\Vert w\right\Vert _{\dot{H}^{-3/2}(M)}^{2}%
\end{align*}
with $C=\int\limits_{-1}^{1}\left\vert \partial_{\theta}\psi\right\vert
^{2}d\theta$. The calculations of the $H^{m}$ norms of $y_{2}E(w)$ are similar.
\end{proof}

\begin{proof}
(of Theorem \ref{thm5}): Given initial data $(w_{0},w_{01},w_{10})(x_{1})$ we are to
give conditions under which the energy norm of the extension $(u_{0},u_{1})$
is finite. First of all, the contribution to the energy given by $u_{1}$ is
simply $\frac{1}{2}\left\Vert u_{1}\right\Vert _{L^{2}}^{2}$, hence by Theorem \ref{thm4}
it is bounded by $C\left\Vert w_{10}\right\Vert _{\dot{H}^{-1/2}}^{2}$.
There are two contributions from $u_{0}$, which can be expressed using the
Plancherel identity:%
\[
\iint \omega^{2}(\xi,\eta^{\prime})\left\vert \hat{w}_{0}(\xi)\right\vert ^{2}%
\chi^{2}(\xi,\eta^{\prime})d\eta^{\prime}d\xi+\iint \omega^{2}(\xi,\eta^{\prime
})\left\vert \hat{w}_{01}(\xi)\right\vert ^{2}\left\vert \partial
_{\eta^{\prime}}\chi^{2}(\xi,\eta^{\prime})\right\vert ^{2}d\eta^{\prime}%
d\xi\text{ .}%
\]
Using that $\chi(\xi,\eta^{\prime})=\psi(\eta^{\prime}/\left\vert
\xi\right\vert )\frac{1}{\left\vert \xi\right\vert }$, we estimate these two
integrals:%
\begin{align*}
&  \iint\limits_{\{\left\vert \eta^{\prime}\right\vert <\left\vert
\xi\right\vert \}}\left(  \left\vert \xi\right\vert ^{2}-\left\vert
\eta^{\prime}\right\vert ^{2}\right)  \left\vert \hat{w}_{0}(\xi)\right\vert
^{2}\psi^{2}(\eta^{\prime}/\left\vert \xi\right\vert )\frac{1}{\left\vert
\xi\right\vert ^{2}}d\eta^{\prime}d\xi\text{ }\\
&  =\int\left\vert \hat{w}_{0}(\xi)\right\vert ^{2}\left[  \int\left(
\left\vert \xi\right\vert -\frac{\left\vert \eta^{\prime}\right\vert ^{2}%
}{\left\vert \xi\right\vert }\right)  \psi^{2}(\eta^{\prime}/\left\vert
\xi\right\vert )\frac{1}{\left\vert \xi\right\vert }d\eta^{\prime}\right]
d\xi\text{ }\\
&  \leq C\left\Vert w_{0}\right\Vert _{\dot{H}^{1/2}}^{2}\text{.}%
\end{align*}%
\begin{align*}
&  \iint\limits_{\{\left\vert \eta^{\prime}\right\vert <\left\vert
\xi\right\vert \}}\left(  \left\vert \xi\right\vert ^{2}-\left\vert
\eta^{\prime}\right\vert ^{2}\right)  \left\vert \hat{w}_{01}(\xi)\right\vert
^{2}\psi^{2}(\eta^{\prime}/\left\vert \xi\right\vert )\frac{1}{\left\vert
\xi\right\vert ^{2}}d\eta^{\prime}d\xi\text{ }\\
&  =\int\left\vert \hat{w}_{01}(\xi)\right\vert ^{2}\left[  \int_{\{\left\vert
\eta^{\prime}\right\vert <\left\vert \xi\right\vert \}}\left(  \frac
{1}{\left\vert \xi\right\vert }-\frac{\left\vert \eta^{\prime}\right\vert
^{2}}{\left\vert \xi\right\vert ^{3}}\right)  \psi^{2}(\eta^{\prime
}/\left\vert \xi\right\vert )\frac{1}{\left\vert \xi\right\vert }d\eta
^{\prime}\right]  d\xi\text{ }\\
&  \leq C\left\Vert w_{01}\right\Vert _{\dot{H}^{-1/2}}^{2}\text{.}%
\end{align*}
\end{proof}

\subsection{The extension problem for general spacelike data}

We now consider the general problem of initial data given on a maximal
spacelike hypersurface of dimension $d_{1}$, extending it to Cauchy data on a
codimension one hypersurface. That is, for $(x,y)\in\mathbb{R}_{x}^{d_{1}}
\times\mathbb{R}_{y}^{d_{2}}$,
\[
M=\left\{  y=0\right\}  \subseteq N=\{y_{1}=0\}\text{.}
\]
Initial data on $M$ take the form $w(x)=(w_{0}(x),w_{\alpha}(x))$ where a
solution $u(x,y)$ of the field equation (\ref{ultra}) is asked to satisfy
\[
u(x,y)=w_{0}(x)
\]
with its first derivatives normal to $M$ satisfying%
\[
\partial_{y}^{\alpha}u(x,0)=w_{\alpha}(x)
\]
where $\alpha\in\mathbb{N}^{d_{2}}$ is the multi-index $\alpha=(\alpha
_{1},...,\alpha_{d_{2}}),$ $\left\vert \alpha\right\vert =1$, such that only
one $\alpha_{j}=1$ and the rest are zero. The object is to extend $w(x)$ to
Cauchy data on $N$ while satisfying the constraints (\ref{Constraint}) to be
in $X^{C}$. This Cauchy data satisfies%
\begin{align*}
u_{0}(x,0)  &  =w_{0}(x)\\
u_{\alpha^{\prime}}(x,0)  &  =w_{0\alpha^{\prime}}(x)
\end{align*}
for $\alpha^{\prime}=(\alpha_{2},...,\alpha_{d_{2}})$ and the first
derivatives normal to $N$ satisfy%
\[
\partial_{y_{1}}u(x,0)=w_{10}(x)\text{ .}%
\]

Following the construction given in section 3.1, define an extension operator%
\[
E(w)(x,y^{\prime}):=\frac{1}{\sqrt{2\pi}^{d_{1}+d_{2}-1}}\iint\hat{w}(\xi
)\chi(\xi,\eta^{\prime})e^{i\xi\cdot x}e^{i\eta^{\prime}\cdot y^{\prime}}d\xi
d\eta^{\prime}%
\]
where the kernel function is even in $\eta$ and satisfies%
\[
\frac{1}{\sqrt{2\pi}^{d_{2}-1}}\int\chi(\xi,\eta^{\prime})d\eta^{\prime
}=1\text{ .}%
\]
To satisfy the constraint that $E(w)\in X^{C}$, we ask that $\mathrm{supp(}%
\chi(\xi,\eta^{\prime}))\subseteq\{(\xi,\eta^{\prime}):\left\vert \eta
^{\prime}\right\vert <\left\vert \xi\right\vert \}$. Such kernel functions are
readily constructed (they are far from being uniquely determined). For
example, a variant of our construction of section 3.1 is based on choice of a
$C_{0}^{\infty}$ function $\psi(\theta)\geq0$, with $\mathrm{supp}%
(\psi)\subseteq B_{1}(0)$, the ball of radius one. Then define
\[
\chi(\xi,\eta^{\prime})=\psi(\eta^{\prime}/\left\vert \xi\right\vert )\frac
{1}{\left\vert \xi\right\vert ^{d_{2}-1}}~\text{.}%
\]
We note that $\chi$ is even in $\eta$ if $\psi(\theta)$ is even, and that%
\begin{align*}
\frac{1}{\sqrt{2\pi}^{d_{2}-1}}\int\chi(\xi,\eta^{\prime})d\eta^{\prime}  &
=\frac{1}{\sqrt{2\pi}^{d_{2}-1}}\int\psi(\eta^{\prime}/\left\vert
\xi\right\vert )\frac{1}{\left\vert \xi\right\vert ^{d_{2}-1}}d\eta^{\prime}\\
&  =\frac{1}{\sqrt{2\pi}^{d_{2}-1}}\int\psi(\theta)d\theta\text{ .}%
\end{align*}
This is normalized to one by choice of $\psi$.

\begin{theorem}
\label{thm6}
The extension operator $E$ is bounded on the following function spaces:
\begin{eqnarray*}
  && E:\dot{H}^{\frac{1-d_{2}}{2}}(M)\rightarrow L^{2}(N)  \\
  && \quad \ \dot{H}^{\frac{1-d_{2}}{2}}(M)\cap
  H^{m+\frac{1-d_{2}}{2}}(M)  \rightarrow H^{m}(N) ~,
\end{eqnarray*}
with $m$ the exponent of Sobolev regularity, and
\begin{eqnarray*}
  && y^{\prime}E:\dot{H}^{\frac{-(1+d_{2})}{2}}(M)\rightarrow L^{2}(N)\\
\dot{H}^{\frac{-(1+d_{2})}{2}}(M)\cap H^{m-\frac{1+d_{2}}{2}}(M)\rightarrow
H^{m}(N)\text{ .}
\end{eqnarray*}
\end{theorem}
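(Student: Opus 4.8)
The plan is to repeat the Fourier-synthesis computation used in the proof of Theorem~\ref{thm4}, keeping careful track of the Jacobian factor that now arises from integrating over the $(d_{2}-1)$-dimensional variable $\eta^{\prime}$. As before, the Fourier transform of $E(w)$ in the variables $(x,y^{\prime})\to(\xi,\eta^{\prime})$ is simply $\hat{w}(\xi)\chi(\xi,\eta^{\prime})$, so by the Plancherel identity
\[
\|E(w)\|_{L^{2}(N)}^{2}=\iint|\hat{w}(\xi)|^{2}\,\psi^{2}(\eta^{\prime}/|\xi|)\,\frac{1}{|\xi|^{2(d_{2}-1)}}\,d\eta^{\prime}\,d\xi .
\]
The decisive step is the change of variables $\theta=\eta^{\prime}/|\xi|$ in the inner integral, whose Jacobian is $d\eta^{\prime}=|\xi|^{d_{2}-1}\,d\theta$; this is where the dimension $d_{2}-1$ enters, and it is exactly what shifts the Sobolev exponents from the values $-1/2,-3/2$ of Theorem~\ref{thm4} to $(1-d_{2})/2$ and $-(1+d_{2})/2$. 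Carrying out the substitution collapses the inner integral to the constant $\|\psi\|_{L^{2}}^{2}$ and leaves
\[
\|E(w)\|_{L^{2}(N)}^{2}=\|\psi\|_{L^{2}}^{2}\int|\hat{w}(\xi)|^{2}\,|\xi|^{1-d_{2}}\,d\xi=\|\psi\|_{L^{2}}^{2}\,\|w\|_{\dot{H}^{(1-d_{2})/2}(M)}^{2},
\]
which is the first assertion.

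For the $H^{m}(N)$ bound I would estimate $\|\partial_{x}^{\alpha}\partial_{y^{\prime}}^{\beta}E(w)\|_{L^{2}}$ for $|\alpha|+|\beta|\le m$. The corresponding Fourier multiplier is $(i\xi)^{\alpha}(i\eta^{\prime})^{\beta}\hat{w}(\xi)\chi$, contributing a factor $|\xi|^{2|\alpha|}|\eta^{\prime}|^{2|\beta|}$ under Plancherel. On $\mathrm{supp}(\chi)\subseteq\{|\eta^{\prime}|<|\xi|\}$ one has $\eta^{\prime}=|\xi|\theta$ with $|\theta|<1$, so after the same substitution the angular integral becomes $\int|\theta^{\beta}|^{2}\psi^{2}(\theta)\,d\theta$, a finite constant precisely because $\psi$ is compactly supported, while the $\xi$-weight becomes $|\xi|^{2|\alpha|+2|\beta|-(d_{2}-1)}\le|\xi|^{2m-(d_{2}-1)}$. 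Summing over $|\alpha|+|\beta|\le m$ yields the bound by $\|w\|_{\dot{H}^{m+(1-d_{2})/2}(M)}^{2}$.

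The estimates for $y^{\prime}E(w)$ proceed identically once one observes that multiplication by $y_{j}^{\prime}$ corresponds on the Fourier side to $\tfrac{1}{i}\partial_{\eta_{j}^{\prime}}$ acting on $\chi$, and $\partial_{\eta_{j}^{\prime}}\chi=(\partial_{\theta_{j}}\psi)(\eta^{\prime}/|\xi|)\,|\xi|^{-d_{2}}$ carries one extra power of $|\xi|^{-1}$. This lowers the homogeneity of the $\xi$-weight by two units and replaces $\psi$ by $\partial_{\theta_{j}}\psi$ (still compactly supported), producing the exponent $-(1+d_{2})/2$ in the $L^{2}(N)$ statement and $m-(1+d_{2})/2$ in the $H^{m}(N)$ statement, the finiteness of $\int|\theta^{\beta}\partial_{\theta_{j}}\psi|^{2}\,d\theta$ again relying on the compact support of $\psi$. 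I do not anticipate a genuine obstacle, since the argument is entirely parallel to Theorem~\ref{thm4}; the only points requiring care are the bookkeeping of the $|\xi|^{d_{2}-1}$ Jacobian and the reduction of mixed multi-index derivatives to the single scalar integral, which is legitimate because $|\eta^{\prime}|\le|\xi|$ throughout $\mathrm{supp}(\chi)$. The mild subtlety worth flagging is that these homogeneous norms carry negative smoothness for $d_{2}\ge2$, so one must read $\dot{H}^{s}(M)$ with $s<0$ through the weight $|\xi|^{2s}$ and restrict to $\hat{w}$ whose low-frequency behavior keeps the integral finite.
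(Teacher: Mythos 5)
Your proposal is correct and follows exactly the route the paper intends: the paper itself only remarks that the proof of Theorem~\ref{thm6} is ``similar to the proofs of Theorems~\ref{thm4} and~\ref{thm5},'' and your computation is precisely that adaptation, with the Jacobian $d\eta^{\prime}=|\xi|^{d_{2}-1}d\theta$ correctly accounting for the shift of the exponents from $-1/2,-3/2$ to $(1-d_{2})/2$ and $-(1+d_{2})/2$ (which reduce to the Theorem~\ref{thm4} values when $d_{2}=2$). The derivative and $y^{\prime}$-multiplication bookkeeping, including the extra factor $|\xi|^{-1}$ from $\partial_{\eta^{\prime}_{j}}\chi$ and the finiteness of the angular integrals by compact support of $\psi$, all check out.
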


Using the extension operator $E$, the vector function $w(x)=(w_{0}%
(x),w_{\alpha}(x))$ extends to Cauchy data on $N$ as follows:%
\begin{align*}
u_{0}(x,y^{\prime}) &  :=E(w_{0})(x,y^{\prime})+\sum\limits_{\left\vert
\alpha^{\prime}\right\vert =1}y^{\alpha^{\prime}}\cdot E(w_{0\alpha^{\prime}%
})(x,y^{\prime})\\
u_{1}(x,y^{\prime}) &  :=E(w_{10})(x,y^{\prime})\text{ .}%
\end{align*}
This is seen to extend the initial data $w(x)$ in the required way, and in
addition it satisfies the constraint that $(u_{0},u_{1})\in X^{C}$. However,
measuring the functions $(u_{0},u_{1})$ in the energy norm is more appropriate
for the Cauchy problem, hence we also state estimates in this setting.

\begin{theorem}
\label{thm7}
Given $w_{0}\in\dot{H}(M)$ and $w_{\alpha}\in\dot{H}(M)$, the energy norm of
the extension
\[
(u_{0},u_{1})=(E(w_{0})+y^{\alpha^{\prime}}\cdot E(w_{(0,\alpha^{\prime}%
)}),E(w_{10}))
\]
is finite; indeed
\[
\left\Vert (u_{0},u_{1})\right\Vert _{X^{C}}^{2}\leq C(\left\Vert
w_{0}\right\Vert _{\dot{H}^{\frac{3-d_{2}}{2}}}^{2}+\left\Vert w_{0\alpha
^{\prime}}\right\Vert _{\dot{H}^{\frac{1-d_{2}}{2}}}^{2}+\left\Vert
w_{10}\right\Vert _{\dot{H}^{\frac{1-d_{2}}{2}}}^{2})  ~.
\]
\end{theorem}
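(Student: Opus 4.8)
The plan is to mirror the proof of Theorem~\ref{thm5}, now carrying along the dimension-dependent powers of $|\xi|$ produced by the kernel $\chi(\xi,\eta')=\psi(\eta'/|\xi|)|\xi|^{-(d_2-1)}$ and by the Jacobian of the scaling substitution. First I would use that $(u_0,u_1)\in X^C$, so the Fourier transform is supported in $\{|\xi|>|\eta'|\}$. Consequently the $\lambda^2$-weighted integral in $\|\cdot\|_X^2$ (taken over $\{|\xi|\le|\eta'|\}$) vanishes, and the energy norm reduces to the two surviving pieces
\[
\|(u_0,u_1)\|_{X^C}^2=\iint_{|\eta'|<|\xi|}\omega^2(\xi,\eta')|\hat u_0|^2\,d\xi\,d\eta'+\iint|\hat u_1|^2\,d\xi\,d\eta',
\]
where $\omega^2=|\xi|^2-|\eta'|^2\ge 0$ throughout this region.

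The $u_1$ piece is immediate: since $u_1=E(w_{10})$, the second integral is exactly $\|E(w_{10})\|_{L^2(N)}^2$, which Theorem~\ref{thm6} bounds by $C\|w_{10}\|_{\dot{H}^{(1-d_2)/2}}^2$. For the $u_0$ piece I would pass to Fourier variables, using that multiplication by $y^{\alpha'}$ becomes $\tfrac1i\partial_{\eta'}^{\alpha'}$, so that
\[
\widehat{u_0}(\xi,\eta')=\hat w_0(\xi)\,\chi(\xi,\eta')+\sum_{|\alpha'|=1}\hat w_{0\alpha'}(\xi)\,\tfrac1i\partial_{\eta'}^{\alpha'}\chi(\xi,\eta').
\]
Because $\bigl(\iint_{|\eta'|<|\xi|}\omega^2|\hat f|^2\bigr)^{1/2}$ is a genuine weighted $L^2$ norm, Minkowski's inequality reduces the estimate to bounding each summand separately, the cross terms being absorbed by Cauchy--Schwarz (equivalently $2ab\le a^2+b^2$). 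Each summand is then evaluated by the substitution $\theta=\eta'/|\xi|$, under which $d\eta'=|\xi|^{d_2-1}\,d\theta$ and $\omega^2=|\xi|^2(1-|\theta|^2)$, with $1-|\theta|^2$ bounded on the support $|\theta|<1$. For the first summand this gives $\int|\hat w_0(\xi)|^2|\xi|^{3-d_2}\,d\xi$ times the finite constant $\int_{|\theta|<1}(1-|\theta|^2)\psi^2(\theta)\,d\theta$, i.e.\ $C\|w_0\|_{\dot{H}^{(3-d_2)/2}}^2$; for the second, using $\partial_{\eta'}^{\alpha'}\chi=(\partial^{\alpha'}\psi)(\eta'/|\xi|)|\xi|^{-d_2}$, it gives $\int|\hat w_{0\alpha'}(\xi)|^2|\xi|^{1-d_2}\,d\xi$ times $\int_{|\theta|<1}(1-|\theta|^2)|\partial^{\alpha'}\psi|^2\,d\theta$, i.e.\ $C\|w_{0\alpha'}\|_{\dot{H}^{(1-d_2)/2}}^2$. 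Summing the three contributions yields the claimed bound.

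The hard part will not be conceptual but rather the bookkeeping of exponents: one must keep the powers of $|\xi|$ exactly right, since the Jacobian $|\xi|^{d_2-1}$, the normalization weight $|\xi|^{-(d_2-1)}$ (or $|\xi|^{-d_2}$ after one $\eta'$-differentiation), and the weight $\omega^2\sim|\xi|^2$ must combine to produce precisely the homogeneous orders $(3-d_2)/2$ and $(1-d_2)/2$. The only point deserving a remark is that $\omega^2$ is nonnegative exactly on $\{|\eta'|<|\xi|\}$, which contains $\mathrm{supp}(\chi)$, so the sign-indefiniteness of the true energy never enters; and the angular integrals $\int_{|\theta|<1}(1-|\theta|^2)\psi^2\,d\theta$ and $\int_{|\theta|<1}(1-|\theta|^2)|\partial^{\alpha'}\psi|^2\,d\theta$ are finite since $\psi\in C_0^\infty$.
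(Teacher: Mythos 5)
Your proposal is correct and is essentially the proof the paper intends: the paper omits an explicit argument for Theorem~\ref{thm7}, referring the reader to the proofs of Theorems~\ref{thm4} and~\ref{thm5}, and your computation is exactly that argument generalized to kernel weight $|\xi|^{-(d_2-1)}$ and Jacobian $|\xi|^{d_2-1}$, with the exponents $(3-d_2)/2$ and $(1-d_2)/2$ coming out correctly. If anything, your treatment is slightly more careful than the paper's model proof, since you explicitly dispose of the cross terms in $|\hat u_0|^2$ via Minkowski/Cauchy--Schwarz, a point the proof of Theorem~\ref{thm5} passes over silently.
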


The proofs of Theorems \ref{thm6} and \ref{thm7} are similar to the proofs of Theorems \ref{thm4} and \ref{thm5}, to which we refer the reader.

\subsection{The extension problem for mixed spacelike and timelike data}

As a final case, we consider the extension problem for initial data on a lower dimensional hypersurface $M$ of \emph{mixed} signature. Given zero'th and first normal derivatives of a solution $u(x,y)$
on $M$, the object is to extend this data to the codimension one hypersurface
$N=\{y_{1}=0\}$ in such a way that the constraint for well-posedness is
satisfied. This is not always possible for arbitrary data $w=(w_{0},w_{\alpha
})$ posed on $M$, due to analogous lower dimensional constraints on $M$. But
it is possible, along with attendant Sobolev bounds on the extended functions,
in most cases. This situation will be explained below.

To set the notation, we consider spacelike and timelike coordinates on $M$ to
be $(\tilde{x},\tilde{y})\in\mathbb{R}^{p_{1}}\times\mathbb{R}^{p_{2}}$, with
their Fourier transform variables denoted $(\tilde{\xi},\tilde{\eta}%
)\in\mathbb{R}^{p_{1}}\times\mathbb{R}^{p_{2}}$. The complementary variables
will be denoted $(x^{\prime\prime},y^{\prime\prime})\in\mathbb{R}^{d_{1}%
-p_{1}}\times\mathbb{R}^{d_{2}-p_{2}-1}$ and $(\xi^{\prime\prime},\eta
^{\prime\prime})\in\mathbb{R}^{d_{1}-p_{1}}\times\mathbb{R}^{d_{2}-p_{2}-1}$,
so that coordinates on $N$ are $(x,y^{\prime})=(\tilde{x},x^{\prime\prime
},\tilde{y},y^{\prime\prime})$. The evolution variable remains $y_{1}$.

Initial data for a solution $u(x,y)$ is given on $N$, which is expressed in
the form $(u,\partial_{x^{\prime\prime}}^{\alpha^{\prime\prime}}%
u,\partial_{y_{1}}^{\beta_{1}}u,\partial_{y^{\prime\prime}}^{\beta
^{\prime\prime}}u)(\tilde{x},\tilde{y},0,0)=(w_{0},w_{\alpha^{\prime\prime}%
},w_{\beta_{1}},w_{\beta^{\prime\prime}})(\tilde{x},\tilde{y})$, where
$\alpha^{\prime\prime}=(\alpha_{p_{1}+1},...,\alpha_{d_{1}})$, $\beta
^{\prime\prime}=(\beta_{p_{2}+1},...,\beta_{d_{2}})$ are multi-indices such
that $\left\vert \alpha^{\prime\prime}\right\vert +\left\vert \beta
^{\prime\prime}\right\vert +\left\vert \beta_{1}\right\vert =1$. The idea is
the same as in sections 3.1 and 3.2, namely to extend $(w_{0},w_{\alpha
^{\prime\prime}},w_{\beta_{1}},w_{\beta^{\prime\prime}})$ to
constraint-satisfying Cauchy data on $N$ in such a way that a solution
$u(x,y)=u(\tilde{x},x^{\prime\prime},\tilde{y},y^{\prime\prime})$ to the field
equation (\ref{ultra}) satisfies%
\[
u(\tilde{x},0,0,\tilde{y},0)=w_{0}(\tilde{x},\tilde{y})
\]
and%
\[
\partial_{y_{1}}u(\tilde{x},0,0,\tilde{y},0)=w_{0\beta_{1}}(\tilde{x}%
,\tilde{y})~,
\]
as well as the compatibility conditions%
\begin{align*}
\partial_{x^{\prime\prime}}^{\alpha^{\prime\prime}}u(\tilde{x},0,0,\tilde
{y},0) &  =w_{(\alpha^{\prime\prime},0)}(\tilde{x},\tilde{y})\\
\partial_{y^{\prime\prime}}^{\beta^{\prime\prime}}u(\tilde{x},0,0,\tilde{y},0)
&  =w_{(0,\beta^{\prime\prime})}(\tilde{x},\tilde{y})
\end{align*}
The existence of such an extension follows as in Theorems \ref{thm6} and \ref{thm7} from the
construction of an extension operator $E$ with certain boundedness properties
on appropriate Sobolev spaces. We will focus our analysis therefore on the
extension operators.

Again following section 3.1, define an extension operator
\[
E(w)(x,y^{\prime})=\frac{1}{\sqrt{2\pi}^{d_{1}+d_{2}-1}}\iint\chi(\tilde{\xi
},\xi^{\prime\prime},\tilde{\eta},\eta^{\prime\prime})d\xi^{\prime\prime}%
d\eta^{\prime\prime}=1\text{.}%
\]
Furthermore, to satisfy the constraint that $E(w)\in X^{C}$ for arbitrary data
$w$, we ask that
\[
\mathrm{supp(\chi(\xi,\eta}^{\prime}))\subseteq\left\{  (\xi,\eta^{\prime
}):\left\vert \eta^{\prime}\right\vert ^{2}<\left\vert \xi\right\vert
^{2}\right\}  :=R_{1}\text{.}%
\]
These two conditions are always satisfiable, except in the case 
$\xi^{\prime\prime}=\left\{  0\right\}  $, meaning that $d_{1}=p_{1}$ and the
extension subspace $\left\{  (\xi^{\prime\prime},\eta^{\prime\prime})\right\}
$ is purely timelike.

It is to be expected that the constraint induces a restriction on the data
$w(\tilde{\xi},\tilde{\eta})$ in the vicinity of the ``lightcone'' $\left\{
\left\vert \tilde{\xi}\right\vert =\left\vert \tilde{\eta}\right\vert
\right\}  \subseteq\hat{M}$. Subdivide $\hat{M}$ into two sets,%
\begin{align*}
\tilde{R}_{1}  &  :=\left\{  (\tilde{\xi},\tilde{\eta})\in\hat{M}:\left\vert
\tilde{\eta}\right\vert \leq\left\vert \tilde{\xi}\right\vert \right\} \\
\tilde{R}_{2}  &  :=\left\{  (\tilde{\xi},\tilde{\eta})\in\hat{M}:\left\vert
\tilde{\eta}\right\vert >\left\vert \tilde{\xi}\right\vert \right\}  \text{.}%
\end{align*}
The orthogonal projections onto functions supported in $\tilde{R}_{1}$,
$\tilde{R}_{2}$ respectively, are denoted $\pi_{1}$ and $\pi_{2}$. We use
standard Sobolev spaces to quantify data supported in $\tilde{R}_{1}$, namely
\[
H^{r}=\left\{  w(\tilde{x},\tilde{y})\in{\mathrm{range}}(\pi_{1}):\left\Vert
w\right\Vert _{H^{r}}^{2}=\iint\limits_{\tilde{R}_{1}}\left\vert \hat
{w}(\tilde{\xi},\tilde{\eta})\right\vert ^{2}(\left\vert \tilde{\xi
}\right\vert ^{2}+\left\vert \tilde{\eta}\right\vert ^{2})^{n}d\tilde{\xi
}d\tilde{\eta}<+\infty\right\}  ~\text{.}%
\]
Over $\tilde{R}_{2}$ we use a modified form of Sobolev norm which is given by
\[
K^{r}=\left\{  w(\tilde{x},\tilde{y})\in{\mathrm{range}}(\pi_{2}):\left\Vert
w\right\Vert _{K^{r}}^{2}=\iint\limits_{\tilde{R}_{2}}\left\vert \hat
{w}(\tilde{\xi},\tilde{\eta})\right\vert ^{2}\frac{(\left\vert \tilde{\xi
}\right\vert ^{2}+\left\vert \tilde{\eta}\right\vert ^{2})^{r}}{(\left\vert
\tilde{\eta}\right\vert ^{2}-\left\vert \tilde{\xi}\right\vert ^{2})^{\frac
{1}{2}e_{0}}}d\tilde{\xi}d\tilde{\eta}<+\infty\right\}  \text{.}%
\]
where
\[
e_{0}:=d_{1}+d_{2}-(p_{1}+p_{2})-1.
\]
We note that in the case where $d_{1}=p_{1}$, $\tilde{R}_{1}=\left\{
0\right\}  $ and $K^{n}=H^{r-\frac{r}{2}(d_{2}-p_{2}-1)}$. More generally,
define%
\[
K_{s}^{r}=\left\{  w(\tilde{x},\tilde{y})\in{\mathrm{range}}(\pi
_{2}):\left\Vert w\right\Vert _{K_{s}^{r}}^{2}=\iint\limits_{\tilde{R}_{2}%
}\left\vert \hat{w}(\tilde{\xi},\tilde{\eta})\right\vert ^{2}\frac{(\left\vert
\tilde{\xi}\right\vert ^{2}+\left\vert \tilde{\eta}\right\vert ^{2})^{r}%
}{(\left\vert \tilde{\eta}\right\vert ^{2}-\left\vert \tilde{\xi}\right\vert
^{2})^{\frac{1}{2}e_{0}+s}}d\tilde{\xi}d\tilde{\eta}<+\infty\right\}
\]
Decompose an arbitrary function $w=\pi_{1}w+\pi_{2}w$, so that its components possess Fourier support in $\tilde{R}_{1}$ and $\tilde{R}_{2}$ respectively.

\begin{theorem}
\label{thm8}
If $d_{1}>p_{1}$ then there is a choice of kernel $\chi$ (indeed there are
many such choices) such that $u=E(w)$ satisfies%
\[
\left\Vert u\right\Vert _{L^{2}}^{2}\leq C(\left\Vert \pi_{1}w\right\Vert
_{H^{-\frac{1}{2}(e_{0})}}^{2}+\left\Vert \pi_{2}w\right\Vert _{K^{0}}%
^{2})\text{.}%
\]
Higher Sobolev norms of $u=E(w)$ are bounded as follows%
\[
\left\Vert u\right\Vert _{H^{r}}^{2}\leq C_{r}(\left\Vert \pi_{1}w\right\Vert
_{H^{r-\frac{1}{2}(e_{0})}}^{2}+\left\Vert \pi_{2}w\right\Vert _{K^{r}}%
^{2}\text{.}%
\]
In case $d_{1}=p_{1}$, it is not possible to extend arbitrary data to a
function $u=E(w)$ which satisfies the constraint $\mathrm{supp}(\hat{u}%
(\xi,\eta^{\prime}))\subseteq R_{1}$. However, if initially $\mathrm{supp}%
(\hat{w}(\xi,\eta^{\prime}))\subseteq\tilde{R}_{1}$ (i.e., $w=\pi_{2}w$), then
such an extension is possible, and we have, for $u=E(w)$,%
\[
\left\Vert u\right\Vert _{L^{2}}^{2}\leq C\left\Vert w\right\Vert _{K^{0}}%
^{2}~\text{,}%
\]%
\[
\left\Vert u\right\Vert _{H^{r}}^{2}\leq C_{r}\left\Vert w\right\Vert _{K^{r}%
}^{2}~\text{.}%
\]
\end{theorem}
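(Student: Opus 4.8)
The plan is to follow the Fourier-synthesis template of Theorems \ref{thm4} and \ref{thm6}. Writing the Fourier transform of the extension on $N$ as $\widehat{E(w)}(\tilde\xi,\xi'',\tilde\eta,\eta'')=\hat w(\tilde\xi,\tilde\eta)\,\chi(\tilde\xi,\xi'',\tilde\eta,\eta'')$, the Plancherel identity reduces every bound to a computation of the kernel mass in the extension variables:
\[
\|E(w)\|_{L^2(N)}^2=\iint|\hat w(\tilde\xi,\tilde\eta)|^2\,W(\tilde\xi,\tilde\eta)\,d\tilde\xi\,d\tilde\eta,\qquad W(\tilde\xi,\tilde\eta):=\iint|\chi|^2\,d\xi''\,d\eta''.
\]
The whole theorem thus amounts to choosing $\chi$, separately on $\operatorname{range}(\pi_1)$ and $\operatorname{range}(\pi_2)$, so that $\chi$ is supported in the constraint cone $R_1=\{|\eta'|^2<|\xi|^2\}$, is $L^1$-normalized in $(\xi'',\eta'')$, and produces exactly the weights defining $H^r$ and $K^r$.

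The mechanism that makes all three requirements compatible is a single scaling. For a fixed profile $\psi\ge 0$ supported in a bounded subset of the extension space and a positive scale $\rho=\rho(\tilde\xi,\tilde\eta)$, set $\chi=\psi\big((\xi'',\eta'')/\rho\big)\,\rho^{-e_0}$. The change of variables $(\xi'',\eta'')=\rho\zeta$ shows that $\int\chi\,d\xi''d\eta''=\int\psi(\zeta)\,d\zeta$ is \emph{independent} of $\rho$, so the normalization can be imposed once and for all on $\psi$, while $W=\rho^{-e_0}\|\psi\|_{L^2}^2$. It remains to pick $\rho$ and the support of $\psi$ region by region. On $\tilde R_1$ one has $|\tilde\eta|\le|\tilde\xi|$, so the isotropic scale $\rho=(|\tilde\xi|^2+|\tilde\eta|^2)^{1/2}$ works: since $|\tilde\eta|^2-|\tilde\xi|^2\le 0$ there, placing $\operatorname{supp}\psi$ in $\{|\xi''|>|\eta''|\}$ keeps $|\eta'|^2<|\xi|^2$ throughout, and $W=(|\tilde\xi|^2+|\tilde\eta|^2)^{-e_0/2}\|\psi\|_{L^2}^2$, which is precisely the $H^{-e_0/2}$ weight. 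On $\tilde R_2$ one has the timelike excess $\delta:=(|\tilde\eta|^2-|\tilde\xi|^2)^{1/2}>0$, and membership in $R_1$ now \emph{forces} $|\xi''|^2-|\eta''|^2>\delta^2$; taking $\rho=\delta$ and $\psi=\phi$ supported in a bounded piece of the hyperboloidal cone $\{|\zeta_1|^2-|\zeta_2|^2>1\}$ gives $W=(|\tilde\eta|^2-|\tilde\xi|^2)^{-e_0/2}\|\phi\|_{L^2}^2$, i.e. exactly the $K^0$ weight.

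The higher-Sobolev estimates then follow at no extra cost. Passing to the $H^r(N)$ norm inserts the Fourier multiplier $(|\xi|^2+|\eta'|^2)^r=(|\tilde\xi|^2+|\xi''|^2+|\tilde\eta|^2+|\eta''|^2)^r$, and on the support of $\chi$ one has $|\xi''|,|\eta''|\lesssim\rho\le(|\tilde\xi|^2+|\tilde\eta|^2)^{1/2}$ in both regions (using $\delta\le|\tilde\eta|$ on $\tilde R_2$). Hence the multiplier is bounded by $C_r(|\tilde\xi|^2+|\tilde\eta|^2)^r$, which upgrades the two weights above to the $H^{r-e_0/2}$ and $K^r$ weights and yields the stated inequalities after recombining $w=\pi_1 w+\pi_2 w$.

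Finally, the dichotomy at $d_1=p_1$ is built into the construction. The $\tilde R_2$ kernel requires $\zeta_1=\xi''/\delta\ne 0$, i.e. at least one spacelike extension direction; when $d_1=p_1$ the extension space is purely timelike, the constraint collapses to $|\tilde\eta|^2+|\eta''|^2<|\tilde\xi|^2$, and it is satisfiable only for data already in the interior of $\tilde R_1$, the scale now being set by the \emph{spacelike} excess $(|\tilde\xi|^2-|\tilde\eta|^2)^{1/2}$ and yielding $K$-type bounds with the corresponding weight. This explains both the failure for arbitrary $w$ and the surviving estimate. The main obstacle is precisely this $\tilde R_2$ construction: fitting a uniformly $L^1$-normalized bump into the non-compact cone $\{|\xi''|^2-|\eta''|^2>\delta^2\}$ at scale $\delta$, the step that needs $d_1>p_1$ and drives the entire case distinction of the theorem.
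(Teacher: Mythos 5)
Your construction is essentially identical to the paper's proof: the same region-by-region scaled bump kernels $\psi((\xi'',\eta'')/\rho)\,\rho^{-e_0}$ with $\rho=(|\tilde\xi|^2+|\tilde\eta|^2)^{1/2}$ on $\tilde R_1$ and $\rho=(|\tilde\eta|^2-|\tilde\xi|^2)^{1/2}$ on $\tilde R_2$, the same support conditions forcing the constraint cone, the same scale-invariant normalization, and the same identification of $d_1>p_1$ as the obstruction on $\tilde R_2$. Your explicit reduction to the kernel mass $W=\rho^{-e_0}\|\psi\|_{L^2}^2$ and the sketch of the higher-order Sobolev bounds merely fill in details the paper leaves to the reader.
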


\begin{proof}
The proof of Theorem \ref{thm8} depends upon the construction of a kernel $\chi
(\tilde{\xi},\tilde{\eta},\xi^{\prime\prime},\eta^{\prime\prime})$ with
satisfactory properties. This construction is slightly different in the two
different regions of Fourier space%
\[
\tilde{R}_{1}:=\left\{  (\tilde{\xi},\tilde{\eta}):\left\vert \tilde{\eta
}\right\vert \leq\left\vert \tilde{\xi}\right\vert \right\}  \text{ and
}\tilde{R}_{2}:=\left\{  (\tilde{\xi},\tilde{\eta}):\left\vert \tilde{\eta
}\right\vert >\left\vert \tilde{\xi}\right\vert \right\}
\]
where we note that the region $\tilde{R}_{2}$ contains the data which would
lead to an ill-posed initial value problem if $M$ were considered itself as a
codimension one hypersurface.

To extend data posed on region $\tilde{R}_{1}$, define
\[
\chi_{1}(\tilde{\xi},\tilde{\eta},\xi^{\prime\prime},\eta^{\prime\prime
}):= \psi_{1}\Bigl(\frac{\xi^{\prime\prime}}{(\left\vert \tilde{\xi}\right\vert
^{2}+\left\vert \tilde{\eta}\right\vert ^{2})^{\frac{1}{2}}},\frac
{\eta^{\prime\prime}}{(\left\vert \tilde{\xi}\right\vert ^{2}+\left\vert
\tilde{\eta}\right\vert ^{2})^{\frac{1}{2}}}\Bigr)\cdot\frac{1}{(\left\vert
\tilde{\xi}\right\vert ^{2}+\left\vert \tilde{\eta}\right\vert ^{2})^{\frac
{1}{2}e_{0}}}\text{,}%
\]
where $\psi_{1}(\theta_{1},\theta_{2})$ is a $C_{0}^{\infty}$ function of
$(d_{1}-p_{1})\times(d_{2}-p_{2}-1)$ variables, respectively, with support in
the set $\left\vert \theta_{2}\right\vert <\left\vert \theta_{1}\right\vert $.
Therefore $\chi_{1}$ has support in the set
\[
\frac{\xi^{\prime\prime}}{(\left\vert \tilde{\xi}\right\vert ^{2}+\left\vert
\tilde{\eta}\right\vert ^{2})^{\frac{1}{2}}}\geq\frac{\eta^{\prime\prime}%
}{(\left\vert \tilde{\xi}\right\vert ^{2}+\left\vert \tilde{\eta}\right\vert
^{2})^{\frac{1}{2}}}
\]
implying that
\[
\left\vert \eta^{\prime}\right\vert ^{2}=\left\vert \tilde{\eta}\right\vert
^{2}+\left\vert \eta^{^{\prime\prime}}\right\vert ^{2}<\left\vert \tilde{\xi
}\right\vert ^{2}+\left\vert \xi^{\prime\prime}\right\vert ^{2}=\left\vert
\xi\right\vert ^{2}\text{.}
\]
This is the appropriate region of support from functions $v=E(w)$ to lie in
the constraint-satisfying subspace of $L^{2}(N)$. In order that $E$ be an
extension operator, we furthermore require that
\begin{align*}
\sqrt{2\pi}^{d_{1}+d_{2}-1} &  =\iint\chi_{1}(\tilde{\xi},\tilde{\eta}%
,\xi^{\prime\prime},\eta^{\prime\prime})d\xi^{\prime\prime}d\eta^{\prime
\prime}\\
&  =\iint\psi_{1}\Bigl(\frac{\xi^{\prime\prime}}{(\left\vert \tilde{\xi}\right\vert
^{2}+\left\vert \tilde{\eta}\right\vert ^{2})^{\frac{1}{2}}},\frac
{\eta^{\prime\prime}}{(\left\vert \tilde{\xi}\right\vert ^{2}+\left\vert
\tilde{\eta}\right\vert ^{2})^{\frac{1}{2}}}\Bigr)\cdot\frac{1}{(\left\vert
\tilde{\xi}\right\vert ^{2}+\left\vert \tilde{\eta}\right\vert ^{2})^{\frac
{1}{2}e_{0}}}d\xi^{\prime\prime}d\eta^{\prime\prime}\\
&  =\iint\psi_{1}(\theta_{1},\theta_{2})d\theta_{1}d\theta_{2}\text{.}%
\end{align*}
Asking that this latter integral equal the normalizing constant $\sqrt{2\pi
}^{d_{1}+d_{2}-1}$, and asking for $\psi_{1}$ to be even in its variables
$(\theta_{1},\theta_{2})$ gives an acceptable kernel for the extension
operator. We note again that this choice of kernel is highly nonunique.

On the region $\tilde{R}_{2}=\left\{  (\tilde{\xi},\tilde{\eta})\in\hat
{M}:\left\vert \tilde{\eta}\right\vert >\left\vert \tilde{\xi}\right\vert
\right\}  $, we can also attempt a construction of our extension operator.
By itself, this region would give rise to data in $L^{2}(M)$ for which the Cauchy problem
of mixed type is ill-posed. The extension operator will nonetheless come up
with data $u=E(w)$ for which the well-posedness constraint is satisfied, if
this is possible. That is, as long as $d_{1}>p_{1}$, so that $\left\{
\xi^{\prime\prime}\right\}  $ is not restricted to the zero-dimensional vector
space, extensions can be found in a way that the default in satisfying the
constraint caused by the fact that $\left\vert \tilde{\xi}\right\vert
<\left\vert \tilde{\eta}\right\vert $ can be made up with a choice of large
$\left\vert \xi^{\prime\prime}\right\vert $.
In practice, we will build $\chi_{2}(\tilde{\xi},\tilde{\eta},\xi
^{\prime\prime}, \eta^{\prime\prime})$ so that its support is in the regions
\[
\left\{  \left\vert \tilde{\eta}\right\vert >\left\vert \tilde{\xi}\right\vert
\right\}  = \tilde{R}_{2}%
\]
as well as
\[
\left\{  \left\vert \tilde{\eta}\right\vert ^{2}+\left\vert \eta^{\prime
\prime}\right\vert ^{2}<\left\vert \tilde{\xi}\right\vert ^{2}+\left\vert
\xi^{\prime\prime}\right\vert ^{2}\right\}  ;
\]
implying that $0\leq(\left\vert \tilde{\eta}\right\vert ^{2}-\left\vert
\tilde{\xi}\right\vert ^{2})+(\left\vert \eta^{\prime\prime}\right\vert
^{2}+\left\vert \xi^{\prime\prime}\right\vert ^{2})$. Thus we require
$d_{1}>p_{1}$.
Following the above examples, assume that $d_{1}>p_{1}$ and set%
\[
\chi_{2}(\tilde{\xi},\tilde{\eta},\xi^{\prime\prime},\eta^{\prime\prime
}):=\psi_{2}\Bigl(\frac{\xi^{\prime\prime}}{(\left\vert \tilde{\eta}\right\vert
^{2}-\left\vert \tilde{\xi}\right\vert ^{2})^{\frac{1}{2}}},\frac{\eta
^{\prime\prime}}{(\left\vert \tilde{\eta}\right\vert ^{2}-\left\vert
\tilde{\xi}\right\vert ^{2})^{\frac{1}{2}}}\Bigr)\cdot\frac{1}{(\left\vert
\tilde{\eta}\right\vert ^{2}-\left\vert \tilde{\xi}\right\vert ^{2})^{\frac
{1}{2}e_{0}}}%
\]
for $(\tilde{\xi},\tilde{\eta})\in\tilde{R}_{2}$. Let $\psi_{2}(\theta
_{1},\theta_{2})$ be a $C_{0}^{\infty}$ function of $e_{0}=d_{1}+d_{2}%
-(p_{1}+p_{2})-1$ variables, as before and require that%
\begin{align*}
\int\psi_{2}(\theta_{1},\theta_{2})d\theta_{1}d\theta_{2} & =\int\psi_{2}
\Bigl(\frac{\xi^{\prime\prime}}{(\left\vert \tilde{\eta}\right\vert
^{2}-\left\vert \tilde{\xi}\right\vert ^{2})^{\frac{1}{2}}},\frac{\eta
^{\prime\prime}}{(\left\vert \tilde{\eta}\right\vert ^{2}-\left\vert
\tilde{\xi}\right\vert ^{2})^{\frac{1}{2}}}\Bigr)\cdot\frac{1}{(\left\vert
\tilde{\eta}\right\vert ^{2}-\left\vert \tilde{\xi}\right\vert ^{2})^{\frac
{1}{2}e_{0}}}d\xi^{\prime\prime}d\eta^{\prime\prime}\\
&  =\sqrt{2\pi}^{e_{0}}\text{.}%
\end{align*}
Furthermore, ask that $\psi(\theta_{1},\theta_{2})$ be even in $(\theta
_{1},\theta_{2})$. Finally ask that the support of $\psi(\theta_{1},\theta
_{2})$ be in the set%
\[
\left\{  (\theta_{1},\theta_{2}):\theta_{1}^{2}-\theta_{2}^{2}>1\right\}
\text{.}%
\]
Such requirements are satisfied by many possible choices of $\psi$. In doing
so, we arrive at a satisfactory kernel of an extension operator $E$ with the
property that all functions $u=E(w)$ in its range have Fourier support
satisfying $\mathrm{supp}(\hat{u})\leq\left\{  \left\vert \eta^{\prime
}\right\vert ^{2}<\left\vert \xi\right\vert ^{2}\right\}  $.
The singularities introduced at the boundaries of the lightcone $\left\{
\left\vert \tilde{\eta}\right\vert =\left\vert \tilde{\xi}\right\vert
\right\}  \subseteq\hat{M}$ by the kernel $\chi_{2}$ impose more severe
constraints on the functions $w$ that are permitted in the domain of the
operator $E$; this is the origin of the somewhat unusual requirements on
functions $w(\tilde{x},\tilde{y})$ from which we can reasonably draw our data.
The Sobolev estimates of the proof are similar to those of Theorems \ref{thm6} and \ref{thm7} and we leave the details to the reader.
\end{proof}

Finally, we show that a sufficiently large class of data $(w_{0}%
,w_{(\alpha^{\prime\prime},0)},w_{(0,\beta^{\prime\prime})},w_{\beta_{1}})$ on
$M$ extends to Cauchy data on the hypersurface $N$ which is both of finite
energy and satisfies the constraint. This extension is given by
\begin{align}
u_{0}(x,y^{\prime}) &  =E(w_{0})(x,y^{\prime})+\sum\limits_{\left\vert
\alpha^{\prime\prime}\right\vert =1}x^{\prime\prime\alpha^{\prime\prime}%
}E(w_{(\alpha^{\prime\prime},0)})(x,y^{\prime})+\sum\limits_{\left\vert
\beta^{\prime\prime}\right\vert =1}y^{\prime\prime\beta^{\prime\prime}%
}E(w_{(0,\beta^{\prime\prime})})(x,y^{\prime})\label{Extension}\\
u_{1}(x,y^{\prime}) &  =E(w_{(0,\beta_{1})})(x,y^{\prime})\text{.}\nonumber
\end{align}
By design, this Cauchy data satisfies the constraint, that is, $(u_{0}%
,u_{1})\in X^{C}$, the center manifold. As before, its restriction to $M$
reduces to the data $(w_{0},w_{(\alpha^{\prime\prime},0)},w_{(0,\beta
^{\prime\prime})})(\tilde{x},\tilde{y},0)$. The only remaining task is to show
that its energy norm is finite. Recall that in this context the energy norm
is
\begin{align*}
H(u_{0},u_{1}) &  =\frac{1}{2}\iint\limits_{N}\left\vert u_{1}\right\vert
^{2}+\left\vert \nabla_{x}u_{0}\right\vert ^{2}-\left\vert \nabla_{y^{\prime}%
}u_{0}\right\vert ^{2}dxdy^{\prime}\\
&  =\frac{1}{2}\iint\limits_{N}\left\vert \hat{u}_{1}(\xi,\eta^{\prime
})\right\vert ^{2}+(\left\vert \xi\right\vert ^{2}-\left\vert \eta^{\prime
}\right\vert ^{2})\hat{u}_{0}(\xi,\eta^{\prime})d\xi d\eta^{\prime}\text{.}%
\end{align*}
To show that this energy is finite for the extension (\ref{Extension}), we use
the results of Theorem \ref{thm8}.

\begin{theorem}
\label{thm9}
Given data $(w_{0},w_{(\alpha^{\prime\prime},0)},w_{(0,\beta^{\prime\prime}%
)},w_{\beta_{1}})$ on $M$ with $\left\vert \alpha^{\prime\prime}\right\vert
=\left\vert \beta^{\prime\prime}\right\vert =1$, suppose that
\begin{align}
&  \left\Vert \pi_{1}w_{0}\right\Vert _{H^{e_{0}+1}}+\sum\limits_{\left\vert
\alpha^{\prime\prime}\right\vert =1}\left\Vert \pi_{1}w_{(\alpha^{\prime
\prime},0)}\right\Vert _{H^{e_{0}+1}}+\sum\limits_{\left\vert \beta
^{\prime\prime}\right\vert =1}\left\Vert \pi_{1}w_{(0,\beta^{\prime\prime}%
)}\right\Vert _{H^{e_{0}+1}}<+\infty\\
&  \left\Vert \pi_{2}w_{0}\right\Vert _{K^{1}}+\sum\limits_{\left\vert
\alpha^{\prime\prime}\right\vert =1}\left\Vert \pi_{2}w_{(\alpha^{\prime
\prime},0)}\right\Vert _{K^{1}}+\sum\limits_{\left\vert \beta^{\prime\prime
}\right\vert =1}\left\Vert \pi_{2}w_{(0,\beta^{\prime\prime})}\right\Vert
_{K^{1}}<+\infty
\end{align}
and%
\[
\left\Vert \pi_{1}w_{\beta_{1}}\right\Vert _{H^{e_{0}}}+\left\Vert \pi
_{2}w_{\beta_{1}}\right\Vert _{K^{0}}+\infty\text{.}%
\]
Then the extension $(u_{0},u_{1})$ given by expression (\ref{Extension}) has
finite energy and lies in the center subspace $X^{C}$. If $d_{1}=p_{1}$, then
we have to ask that $\pi_{2}w_{\gamma}=0$ in the above statement, for all
multi-indices $\gamma$ in question.
\end{theorem}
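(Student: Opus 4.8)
The plan is to bound the energy $H(u_0,u_1)=\tfrac12\iint_N\bigl(|\hat u_1|^2+\omega^2(\xi,\eta')|\hat u_0|^2\bigr)\,d\xi\,d\eta'$ of the extension (\ref{Extension}) term by term, using the operator bounds of Theorem~\ref{thm8} together with the explicit kernels $\chi_1$ (on $\tilde R_1$) and $\chi_2$ (on $\tilde R_2$) constructed in its proof. First I would dispose of the membership $(u_0,u_1)\in X^C$, which is automatic: each building block $E(\cdot)$ has Fourier support in $R_1=\{|\eta'|^2<|\xi|^2\}$ by the support conditions imposed on $\psi_1,\psi_2$, and the coordinate multiplications by $x''^{\alpha''}$ and $y''^{\beta''}$ act in Fourier space as $\tfrac1i\partial_{\xi''}$ and $\tfrac1i\partial_{\eta''}$ on $\chi$, which can only shrink the (closed) support within $R_1$. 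Hence $\mathrm{supp}(\hat u_0,\hat u_1)\subseteq R_1$, as already noted below (\ref{Extension}), and on $R_1$ the weight $\omega^2=|\xi|^2-|\eta'|^2\ge 0$, so the energy functional is genuinely a nonnegative norm. It then remains only to show $H(u_0,u_1)<+\infty$.

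The $u_1$ term contributes $\tfrac12\|E(w_{\beta_1})\|_{L^2(N)}^2$, which is controlled directly by the $L^2$ bound of Theorem~\ref{thm8}, giving $\le C(\|\pi_1 w_{\beta_1}\|_{H^{e_0}}^2+\|\pi_2 w_{\beta_1}\|_{K^0}^2)$, precisely the hypothesis on $w_{\beta_1}$. For the $u_0$ term I would apply the triangle inequality to separate the unmultiplied piece $E(w_0)$ from the pieces $x''^{\alpha''}E(w_{(\alpha'',0)})$ and $y''^{\beta''}E(w_{(0,\beta'')})$, and estimate each against the energy weight $\omega^2$ separately over the two Fourier regions. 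The essential scaling facts are that on the support of $\chi_1$ one has $|\xi''|,|\eta''|\sim R:=(|\tilde\xi|^2+|\tilde\eta|^2)^{1/2}$, whence $\omega^2\lesssim R^2$, while on the support of $\chi_2$ one has $|\xi''|,|\eta''|\sim S:=(|\tilde\eta|^2-|\tilde\xi|^2)^{1/2}$, whence $\omega^2\lesssim S^2$. Integrating out $(\xi'',\eta'')$ using $\int|\chi_i|^2\,d\xi''d\eta''=CR^{-e_0}$ (resp.\ $S^{-e_0}$) for the unmultiplied terms, and $\int|\partial_{\xi''}\chi_i|^2=\int|\partial_{\eta''}\chi_i|^2=CR^{-(e_0+2)}$ (resp.\ $S^{-(e_0+2)}$) for the coordinate-multiplied terms, converts the $\omega^2$-weighted energy into integrals weighted by powers of $R$ over $\tilde R_1$ and by powers of $(|\tilde\eta|^2-|\tilde\xi|^2)$ over $\tilde R_2$. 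These are dominated by the $H^{e_0+1}$ and $K^1$ norms of $w_0,w_{(\alpha'',0)},w_{(0,\beta'')}$ respectively, so the stated hypotheses, which are chosen uniformly strong enough to cover every term, yield $H(u_0,u_1)<+\infty$.

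The hard part will be the region $\tilde R_2=\{|\tilde\eta|>|\tilde\xi|\}$, where $\chi_2$ carries the singular factor $(|\tilde\eta|^2-|\tilde\xi|^2)^{-e_0/2}$ concentrated on the lightcone $\{|\tilde\eta|=|\tilde\xi|\}\subseteq\hat M$; this is exactly the part of the data that would make the Cauchy problem on $M$ ill-posed, and it is the source of the weighted spaces $K^r$. The delicate point is that the coordinate multiplications generate $\partial_{\xi''}\chi_2$ and $\partial_{\eta''}\chi_2$, worsening the singularity by a factor $S^{-1}$, so one must verify that the gain $\omega^2\lesssim S^2$ furnished by the energy weight exactly compensates this loss, landing the contributions of $w_0,w_{(\alpha'',0)},w_{(0,\beta'')}$ in $K^1$ (and that of $w_{\beta_1}$ in $K^0$). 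Finally, when $d_1=p_1$ the complementary spatial variable $\xi''$ is absent, the $\chi_2$ construction is unavailable, and the default near the lightcone cannot be repaired; in that case one assumes $\pi_2 w_\gamma=0$ for all the data $\gamma$, which removes the $\tilde R_2$ contributions entirely and reduces the $u_0$ estimate to the $\tilde R_1$ computation alone, yielding the final clause of the theorem.
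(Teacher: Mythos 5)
Your proposal follows essentially the same route as the paper's proof: the $u_{1}$ term and the unmultiplied $E(w_{0})$ term are handled directly by Theorem \ref{thm8}, the coordinate multiplications $x^{\prime\prime\alpha^{\prime\prime}}$, $y^{\prime\prime\beta^{\prime\prime}}$ are converted to $\frac{1}{i}\partial_{\xi^{\prime\prime}}$, $\frac{1}{i}\partial_{\eta^{\prime\prime}}$ acting on the kernels $\chi_{1},\chi_{2}$ (which also preserves the Fourier support in $R_{1}$ and hence the constraint), and the resulting extra power of $(|\tilde{\xi}|^{2}+|\tilde{\eta}|^{2})$ resp. $(|\tilde{\eta}|^{2}-|\tilde{\xi}|^{2})$ in the denominator is absorbed by the energy weight $\omega^{2}$ after integrating out $(\xi^{\prime\prime},\eta^{\prime\prime})$, with the $d_{1}=p_{1}$ clause handled exactly as in Theorem \ref{thm8} by requiring $\pi_{2}w_{\gamma}=0$. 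The argument is correct and, if anything, spelled out in slightly more detail than the paper's own sketch.
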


\begin{proof}
Estimates on the contributions of $w_{0}$ to $u_{0}$ follow immediately from
Theorem \ref{thm8}, as do the estimates for $u_{1}=E(w_{\beta_{1}})$. Therefore we
only have to consider contributions in one of the two possible forms:%
\[%
\begin{array}
[c]{cc}%
x^{\prime\prime\alpha^{\prime\prime}}E(w_{(\alpha^{\prime\prime},0)}) &
\qquad\left\vert \alpha^{\prime\prime}\right\vert =1
\end{array}
\]
or%
\[%
\begin{array}
[c]{cc}%
y^{\prime\prime\beta^{\prime\prime}}E(w_{(0,\beta^{\prime\prime})}) &
\qquad\left\vert \beta^{\prime\prime}\right\vert =1
\end{array}
\text{.}%
\]
The energy norm includes the quantities $\left\Vert x^{\prime\prime
\alpha^{\prime\prime}}E(w_{(\alpha^{\prime\prime},0)})\right\Vert _{H^{1}}$
and $\left\Vert y^{\prime\prime\beta^{\prime\prime}}E(w_{(0,\beta
^{\prime\prime})})\right\Vert _{H^{1}}$; since the estimates are similar we
will give a sketch of one of them.
\begin{align*}
& \left\Vert x^{\prime\prime\alpha^{\prime\prime}}E(w_{(\alpha^{\prime\prime
},0)})\right\Vert _{H^{1}}^{2} =\left\Vert \frac{1}{i}\partial_{\xi
^{\prime\prime}}^{a^{\prime\prime}}\widehat{E(w_{(\alpha^{\prime\prime},0)}
)}(\left\vert \xi\right\vert ^{2}+\left\vert \eta^{\prime}\right\vert
^{2})^{\frac{1}{2}}\right\Vert _{L^{2}}^{2}  \\
&
\leq\iiiint\Bigl[\partial_{\xi^{\prime\prime}}^{a^{\prime\prime}}\psi_{1}
\Bigl(\frac{\xi^{\prime\prime}}{(\left\vert \tilde{\xi}\right\vert
^{2}+\left\vert \tilde{\eta}\right\vert ^{2})^{\frac{1}{2}}},\frac
{\eta^{\prime\prime}}{(\left\vert \tilde{\xi}\right\vert ^{2}+\left\vert
\tilde{\eta}\right\vert ^{2})^{\frac{1}{2}}}\Bigr)\cdot\frac{1}{(\left\vert
\tilde{\xi}\right\vert ^{2}+\left\vert \tilde{\eta}\right\vert ^{2})^{\frac
{1}{2}e_{0}}}\Bigr]^{2}\left\vert \widehat{\pi_{1}w_{(\alpha^{\prime\prime
},0)}}(\tilde{\xi},\tilde{\eta})\right\vert ^{2}  \\
&  +\Bigl[\partial_{\xi^{\prime\prime}}^{a^{\prime\prime}}\psi_{2}\Bigl(\frac
{\xi^{\prime\prime}}{(\left\vert \tilde{\eta}\right\vert ^{2}-\left\vert
\tilde{\xi}\right\vert ^{2})^{\frac{1}{2}}},\frac{\eta^{\prime\prime}
}{(\left\vert \tilde{\eta}\right\vert ^{2}-\left\vert \tilde{\xi}\right\vert
^{2})^{\frac{1}{2}}}\Bigr)\cdot\frac{1}{(\left\vert \tilde{\eta}\right\vert
^{2}-\left\vert \tilde{\xi}\right\vert ^{2})^{\frac{1}{2}e_{0}}}
\Bigr]^{2}\left\vert \widehat{\pi_{2}w_{(\alpha^{\prime\prime},0)}}(\tilde
{\xi},\tilde{\eta})\right\vert ^{2}d\tilde{\xi}d\tilde{\eta}d\xi^{\prime
\prime}d\eta^{\prime\prime}\text{.}
\end{align*}
The $\xi^{\prime\prime}$-derivative introduces one extra factor of
$(\left\vert \tilde{\xi}\right\vert ^{2}+\left\vert \tilde{\eta}\right\vert
^{2})$, respectively $(\left\vert \tilde{\eta}\right\vert ^{2}-\left\vert
\tilde{\xi}\right\vert ^{2})$, into the denominator. The integral over
$(\xi^{\prime\prime},\eta^{\prime\prime})$ gives a constant, depending upon
$\psi_{1}$ and $\psi_{2}$, as a bound, while the resulting integral over the
variable $(\tilde{\xi},\tilde{\eta})$ is bounded by the $H^{1-e_{0}}$ norm
(respectively, the $K_{1}^{1}~$norm) of $w_{(\alpha^{\prime\prime},0)}$. This
finishes the proof.
\end{proof}

\section{Failure of uniqueness in higher codimension}

\label{Sec:FailureUniqueness}

The question addressed in this section is the uniqueness of solutions with
prescribed initial data on a hypersurface $M$ of codimension greater than one.
This is a nontrivial issue if one requires that solutions exist globally in
space-time, which has been the focus of the analysis in the preceding sections.
In section 3 we showed that initial data consisting of the values of the
solution $u(x,y)$ and its first normal derivatives on $M$, through a procedure
of extension, give rise to constraint-satisfying Cauchy data on a
codimension-one hypersurface $N$. These extensions are highly nonunique, and
therefore so are the resulting global solutions.

We now raise the question whether prescribing an arbitrarily large but finite
number of normal derivatives on $M$, as well as insisting upon global
solutions, would remedy the nonuniqueness. This data should satisfy the
compatibility conditions implied by the commuting of mixed partial derivatives
and by equation \eqref{ultra}. Given Courant's classic result (1962) in the
case of purely timelike $M$, that data given in any $\varepsilon$-tubular
neighborhood of $M$ within $N$ determine solutions uniquely in the $C^{2}$
category, one might think that specifying additional data for $u(x,y)$ on $M$
would suffice. In fact, if one specifies any finite number of derivatives of
$u$ on $M$ it does not.

\begin{theorem}
\label{thm10}
Given $k$, there exist constraint-satisfying data $u_{0},u_{1}$ on $N$ which
vanish to order $k$ on $M$.
\end{theorem}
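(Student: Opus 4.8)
The plan is to produce explicit nonzero data $(u_0,u_1)\in X^{C}$ all of whose $y'$-derivatives up to order $k$ vanish on $M=\{y'=0\}$; the representative case is $M=\{y=0\}\subseteq N=\{y_1=0\}$ with $y'=(y_2,\dots,y_{d_2})\in\mathbb{R}^{d_2-1}$, the general $M$ being analogous. First I would translate the vanishing condition into the Fourier picture. Since
\[
\partial_{y'}^{\gamma'}u_j(x,0)=\frac{1}{\sqrt{2\pi}^{\,d}}\int e^{i\xi\cdot x}\Bigl(\int (i\eta')^{\gamma'}\hat u_j(\xi,\eta')\,d\eta'\Bigr)d\xi,\qquad d=d_1+d_2-1,
\]
the requirement that $u_j$ vanish to order $k$ on $M$ is equivalent to the moment conditions $\int(\eta')^{\gamma'}\hat u_j(\xi,\eta')\,d\eta'=0$ for almost every $\xi$ and all $|\gamma'|\le k$, $j=0,1$.

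The one thing that needs checking is that these moment conditions are compatible with the support constraint $\mathrm{supp}(\hat u_j)\subseteq R_1=\{|\eta'|<|\xi|\}$ that defines $X^{C}$. I would resolve this with the self-similar ansatz already used for the extension operator. Fix $\psi\in C_0^{\infty}(B_{1/2})\subseteq C_0^{\infty}(\mathbb{R}^{d_2-1})$, not identically zero, with $\int\theta^{\gamma'}\psi(\theta)\,d\theta=0$ for all $|\gamma'|\le k$; such $\psi$ exist in abundance (the finitely many moment functionals have infinite-dimensional common kernel in $C_0^\infty$, and concretely one may take $\psi=\partial_{\theta_1}^{k+1}\phi$ for a bump $\phi$, since then $\int\theta^{\gamma'}\psi=(-1)^{k+1}\int(\partial_{\theta_1}^{k+1}\theta^{\gamma'})\phi=0$ whenever $\gamma_1'\le k$). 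Then set
\[
\hat u_j(\xi,\eta')=g_j(\xi)\,|\xi|^{-(d_2-1)}\,\psi(\eta'/|\xi|),\qquad j=0,1,
\]
with $g_0,g_1\in C_0^{\infty}(\mathbb{R}^{d_1}\setminus\{0\})$ not both zero. The support of $\psi$ forces $|\eta'|\le\tfrac12|\xi|<|\xi|$, so $(u_0,u_1)\in X^{C}$; and the substitution $\eta'=|\xi|\theta$ turns each moment into $|\xi|^{|\gamma'|}\int\theta^{\gamma'}\psi(\theta)\,d\theta=0$, so the data vanish to order $k$ on $M$ and are manifestly nonzero.

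It then remains to check finiteness of the energy norm, which places $(u_0,u_1)$ genuinely in $X^{C}$; the identical computation on the $\|\cdot\|_{X^m}$ norms places it in $X^{m,C}$ for every $m$, so by the Corollary the global solution is smooth. Using $\omega^2=|\xi|^2-|\eta'|^2$ on $R_1$ and $\eta'=|\xi|\theta$, the $\hat u_0$ term equals $\bigl(\int(1-|\theta|^2)\psi(\theta)^2\,d\theta\bigr)\int|\xi|^{3-d_2}|g_0(\xi)|^2\,d\xi$ and the $\hat u_1$ term equals $\|\psi\|_{L^2}^2\int|\xi|^{-(d_2-1)}|g_1(\xi)|^2\,d\xi$; both are finite precisely because $g_0,g_1$ are smooth, compactly supported, and vanish near $\xi=0$. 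I expect no genuine obstacle: the only delicate point is the small-$|\xi|$ weight $|\xi|^{3-d_2}$, which is exactly why the profiles $g_j$ are supported away from the origin.

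Finally I would record why this yields the failure of uniqueness advertised in the section. Repeated use of \eqref{ultra_cauchy} gives $\partial_{y_1}^{2m}\partial_{y'}^{\gamma'}u|_M=(\triangle_x-\triangle_{y'})^m\partial_{y'}^{\gamma'}u_0|_M$ and $\partial_{y_1}^{2m+1}\partial_{y'}^{\gamma'}u|_M=(\triangle_x-\triangle_{y'})^m\partial_{y'}^{\gamma'}u_1|_M$; since $2m+|\gamma'|\le k$ keeps the order of the surviving $y'$-derivatives within $k$, the full spacetime solution $u$ generated by $(u_0,u_1)$ also vanishes to order $k$ on $M$. As $u\not\equiv0$, it agrees to order $k$ on $M$ with the zero solution while differing from it everywhere else, which is the desired nonuniqueness.
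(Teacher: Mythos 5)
Your proposal is correct, and it proves the theorem by what is in essence the Fourier-dual of the paper's argument. The paper works on the physical side: it takes any constraint-satisfying Schwartz function $v=\mathcal{F}^{-1}\chi_3$ and multiplies by a homogeneous polynomial $p_{k+1}$ in the variables normal to $M$, observing that this kills the Taylor expansion on $M$ to order $k$ while, since polynomial multiplication is differentiation in Fourier variables, it cannot enlarge the Fourier support beyond $\{|\eta'|<|\xi|\}$. You instead encode ``vanishing to order $k$ on $M$'' as the moment conditions $\int(\eta')^{\gamma'}\hat u_j\,d\eta'=0$, $|\gamma'|\le k$, and build a self-similar profile $\psi(\eta'/|\xi|)$ with vanishing moments; your concrete choice $\psi=\partial_{\theta_1}^{k+1}\phi$ is literally the Fourier image of the paper's multiplication by $y_2^{k+1}$, so the mechanism is the same. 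What your version buys is explicit quantitative control: the scaling computation gives finiteness of $\|\cdot\|_{X}$ and indeed of every $\|\cdot\|_{X^{m}}$ norm (hence smoothness of the global solution via the Corollary), which the paper leaves implicit, and your closing reduction of the $y_1$-derivatives via the equation makes precise why the full spacetime solution, not just the Cauchy data, vanishes to order $k$ on $M$. What the paper's version buys is generality and brevity: the polynomial $p_{k+1}(x'',y'')$ handles an $M$ with both spacelike and timelike normal directions inside $N$ in one stroke, whereas your construction is written for $M=\{y=0\}$ and the adaptation to general $M$ (where the moment conditions must be imposed jointly in $(\xi'',\eta'')$ while keeping the support inside the cone, which as in Theorem \ref{thm8} requires $d_1>p_1$ or a restriction of the data) deserves more than the word ``analogous.'' This is a presentational shortfall rather than a gap, since the codimension-$d_2$ spacelike case you treat is the one emphasized throughout the paper.
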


\noindent Therefore, there exists a globally defined solution $u(x,y)$ which
has initial data $u(x,y)=u_{0}$, $\partial_{y_{1}}u(x,y)=u_{1}$ on $N$, which
vanishes to order $k$ on $M$. Hence any other solution $v(x,y)$ which takes on
specified data on $M$ up to $k$-many derivatives may be changed by adding this
solution $u$ to it, without changing its initial data.

\begin{proof}
We follow a construction that was used for the extension operators of section
$3$. Let $\chi_{3}(\xi,\eta^{\prime})$ be a Schwartz class function with
support in the set $\left\{  \left\vert \eta^{\prime}\right\vert
^{2}<\left\vert \xi\right\vert ^{2}\right\}  \subseteq\hat{N}$. Its Fourier
restriction to $\hat{M}$, given by
\[
\iint\chi_{3}(\tilde{\xi},\tilde{\eta},\xi^{\prime\prime},\eta^{\prime\prime
})d\xi^{\prime\prime}d\eta^{\prime\prime}=\mu(\tilde{\xi},\tilde{\eta})
\]
is in Schwartz class in $\hat{M}$. Because of the support of $\chi_{3}$,
\[
v(x,y^{\prime})=(\mathcal{F}^{-1}\chi_{3})(x,y^{\prime})
\]
satisfies the constraint. While $v$ may be nonzero on $M$, as may its
derivatives, it is the case that for homogeneous polynomials $p_{k+1}%
(x^{\prime\prime},y^{\prime\prime})$ of degree $k+1$, the function
$p_{k+1}(x^{\prime\prime},y^{\prime\prime})v(x,y^{\prime})$ on $N$ vanishes on
$M$ to at least order $k$. Furthermore $p_{k+1}v$ satisfies the constraint.
Indeed,%
\[
(\mathcal{F}p_{k+1}v)(\xi,\eta^{\prime})=p_{k+1}(\frac{1}{i}\partial
_{\xi^{\prime\prime}},\frac{1}{i}\partial_{\eta^{\prime\prime}})\chi_{3}%
(\xi,\eta^{\prime})\text{,}%
\]
and differential operators do not affect the support.
Set data $u_{0}(x,y^{\prime})=(p_{k+1}v)(x,y^{\prime})$ and $u_{1}=0$, and
solve equation (\ref{ultra}). Because this data satisfies the constraint, the
solution $u(x,y)$ is global. Because of the properties of the initial data,
all $x$ and $y^{\prime}$ derivatives of $u(x,y)$ vanish on $M$. Because
$u_{1}=0$ and $u$ itself satisfies equation (\ref{ultra}), all $y_{1}$
derivatives up to order $k$ as well as any mixed derivatives also vanish.
\end{proof}

\section{A variant of a uniqueness theorem of Courant}

Courant (1962) gives a uniqueness result for the ultrahyperbolic equation with
data posed on a hypersurface of mixed signature, which in our notation states
that, among $C^{2}$ solutions, initial values of $u(x,0,y^{\prime})$ and
$\partial_{y_{1}}u(x,0,y^{\prime})$ prescribed in the set in the Cauchy
hypersurface $M$ given by
\begin{equation}
\sum_{\ell=1}^{d_{1}}(x_{\ell}-x_{\ell}^{0})^{2}\leq a^{2}~,\qquad\sum
_{\ell=2}^{d_{2}}(y_{\ell}-y_{\ell}^{0})^{2}\leq\varepsilon^{2}%
\end{equation}
will determine \textit{a priori} the values of the data on the larger set
\begin{equation}
\left\{  (x,y^{\prime})\in M\ :\ \sqrt{\sum_{\ell=1}^{d_{1}}(x_{\ell}-x_{\ell
}^{0})^{2}}+\sqrt{\sum_{\ell=2}^{d_{2}}(y_{\ell}-y_{\ell}^{0})^{2}}\leq
a\right\}  ~.\label{Eqn:CourantStatement1}%
\end{equation}
Furthermore the solution is determined uniquely in the space-time region
\begin{equation}
\left\{  (x,y)\in{\mathbb{R}}^{d_{1}+d_{2}}\ :\ \sqrt{\sum_{\ell=1}^{d_{1}%
}(x_{\ell}-x_{\ell}^{0})^{2}}+\sqrt{\sum_{\ell=1}^{d_{2}}(y_{\ell}-y_{\ell
}^{0})^{2}}\leq a\right\}  ~.\label{Eqn:CourantStatement2}%
\end{equation}
Courant's proof of this fact uses the Asgeirsson mean value theorem in a
fundamental way.

The key implication from our point of view is that data on an arbitrarily
small cylindrical subset of $M$, plus the stipulation of $C^{2}$ regularity,
determine the data and indeed the solution on much larger sets of $M$ and of
space-time, respectively. In turn, knowledge of the data in a small cylinder
determines the values of all of its derivatives on $N=\{(x,y)\ :\ y=0\}$ (if
the data are smooth). This contrasts to the case discussed in
section~\ref{Sec:FailureUniqueness}, in which it is shown that specification
of a possibly large but finite number of derivatives does not lead to unique
solutions, even when the constraint is imposed and the resulting solutions are
globally defined and smooth.

In this section we give a version of the above theorem of Courant, for data
posed in ellipsoidal domains in the Cauchy hypersurface $M$, which are
localized near the $\{y^{\prime}=0\}$ coordinate axis (or any translate
thereof). Our proof of this result is based on the Holmgren--John theorem
(John 1982), and therefore remains true under perturbations to the equation.
Thus it is a robust generalization of the Courant result, which being based on
Asgeirsson's theorem is true only for precisely the ultrahyperbolic equation.

\begin{theorem}
\label{thm11}
Let $\varepsilon >0$ and define the ellipsoid
$Z_\varepsilon \subseteq M$ by
\begin{equation}
Z_\varepsilon = \{(x,y) \ : \ y_1 = 0 \ ,
|x|^2 +\frac{|y'|^2}{\varepsilon^2} < 1 \} ~, \qquad
0 < \varepsilon \leq 1~ .
\end{equation}
A $C^2$ solution to \eqref{ultra} whose Cauchy data vanishes on
$Z_\varepsilon$ must necessarily vanish on the set
\[
D = \{ (x,y) \in {\mathbb R}^{d_1+d_2} \ : \ |x| + |y| < 1 \} ~
\]
and in particular its Cauchy data along with all derivatives
must vanish on the subset  of the Cauchy hypersurface given by
$\{ (x,y') \in M \ : \ |x| + |y'| < 1 \}$.
\end{theorem}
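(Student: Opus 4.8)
The plan is to derive the statement from the Holmgren--John uniqueness theorem by exhibiting an explicit family of non-characteristic hypersurfaces that sweeps out $D$ starting from a portion of $M$ on which the Cauchy data is known to vanish. Since the operator $\triangle_x-\triangle_y$ has constant, hence analytic, coefficients, Holmgren's theorem applies in the form: a $C^2$ solution of \eqref{ultra} whose Cauchy data vanishes on an open piece of a non-characteristic $C^2$ hypersurface vanishes in a full two-sided neighborhood of that piece. For \eqref{ultra} a surface with conormal $\nu=(\nu_x,\nu_y)$ is non-characteristic precisely when $|\nu_x|\neq|\nu_y|$, and it is ``spacelike'' (the admissible Cauchy direction, as for $M=\{y_1=0\}$, whose conormal has $|\nu_x|=0<1=|\nu_y|$) when $|\nu_x|<|\nu_y|$. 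As a base step, since $0\in Z_\varepsilon$ and $M$ is non-characteristic, the vanishing of the Cauchy data on the open set $Z_\varepsilon\subseteq M$ gives, by Holmgren, that $u\equiv0$ on an open neighborhood of $Z_\varepsilon$ in $\mathbb{R}^{d_1+d_2}$.

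The engine of the continuation is a family of spacelike ``caps'' adapted to the characteristic cone. Over the base $\{|x|+|y'|<1\}$ set $\Psi(x,y')=\sqrt{(1-|x|)^2-|y'|^2}$ and consider the surfaces $\Sigma_\lambda=\{y_1=\lambda\,\Psi(x,y')\}$ for $\lambda\in[0,1)$, together with their rescalings, translations, and $y_1\mapsto-y_1$ reflections. A direct computation gives $|\nabla_x\Psi|^2=(1-|x|)^2/\Psi^2$ and $|\nabla_{y'}\Psi|^2=|y'|^2/\Psi^2$, so that for $\Sigma_\lambda$ one has $|\nu_x|^2=\lambda^2|\nabla_x\Psi|^2$ and $|\nu_y|^2=1+\lambda^2|\nabla_{y'}\Psi|^2$, and the spacelike condition $|\nu_x|^2<|\nu_y|^2$ collapses to $\lambda^2\big((1-|x|)^2-|y'|^2\big)/\Psi^2=\lambda^2<1$. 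Thus every $\Sigma_\lambda$ with $\lambda<1$ is strictly non-characteristic, while the limit $\lambda\to1$ recovers $\{|x|+|y|=1\}$, a portion of $\partial D$; this is the expected degeneration at the edge of the uniqueness region. Re-centered small copies of these caps give, about each point of $M$, a double cone $\{|x-c|+|y-(0,c')|<\rho\}$ on which $u$ vanishes whenever the cone's base disk lies in a region already known to vanish.

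With this family I would run the continuation. Let $W$ be the largest open subset of $D$ containing the base-step neighborhood on which $u\equiv0$, and suppose for contradiction that $W\neq D$; pick $p\in\partial W\cap D$. Through $p$ one draws a spacelike cap whose lower portion lies in $W$, so that $u$ has vanishing Cauchy data there; Holmgren then propagates $u=0$ across $p$, contradicting $p\in\partial W$. Equivalently, one iterates the cap construction to enlarge the zero-set, and because $\partial D=\{|x|+|y|=1\}$ is itself characteristic --- its conormal $(x/|x|,y/|y|)$ has $|\nu_x|=|\nu_y|=1$ --- it is the envelope of the caps, so the process fills exactly $D$ and no larger set. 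Finally, since $D$ is open and $M\cap D=\{y_1=0,\ |x|+|y'|<1\}$, the vanishing of $u$ throughout the neighborhood $D$ shows that $u$ and all of its derivatives vanish on $M\cap D$, which is the concluding assertion.

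I expect the main difficulty to lie in the bookkeeping of the continuation rather than in the spacelike estimate, which is clean. Two issues need care. First, $\Psi$ fails to be smooth along $\{x=0\}$ and $\{y'=0\}$, so the caps carry conical singularities there; since only interior points of $D$ must be reached, I would displace the singular axis by using off-axis cones and recover the remaining points by the $O(d_1)\times O(d_2-1)$ symmetry and continuity, or smooth the caps so that they stay strictly spacelike inside $D$. Second --- and this is the genuinely delicate point caused by the data being prescribed only on the thin ellipsoid $Z_\varepsilon$ rather than on all of $\{|x|+|y'|<1\}$ --- one must verify that the rims of the deforming surfaces can be kept inside the current zero-set throughout, so that the lateral spreading within $M$ (from the ellipsoid out to the full set $\{|x|+|y'|<1\}$) is genuinely driven by vanishing Cauchy data; this is exactly where the connectedness/exhaustion argument and the compactness hypothesis of the Holmgren--John theorem must be checked carefully.
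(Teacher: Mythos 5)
Your overall strategy --- Holmgren--John applied to an explicit analytic family of noncharacteristic hypersurfaces sweeping out $D$ --- is the same as the paper's, and your spacelike computation for the caps $\Sigma_\lambda=\{y_1=\lambda\Psi(x,y')\}$ is correct. But there is a genuine gap, and it sits exactly at the point you flag as ``delicate'': your sweeping surfaces are graphs over the base $\{|x|+|y'|<1\}\subseteq M$, so their rims (for small $\lambda$) lie essentially on $\{|x|+|y'|=1,\ y_1=0\}$, far outside the thin ellipsoid $Z_\varepsilon$ when $\varepsilon$ is small. The Holmgren--John continuation can only be run if the boundaries of the deforming surfaces stay inside the set where the solution is already known to vanish, and your family never satisfies this: the base step only yields vanishing on a neighborhood of $Z_\varepsilon$, which pinches near $\partial Z_\varepsilon$ and does not contain the rims of any $\Sigma_\lambda$ with nontrivial base. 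Your fallback open--closed argument does not repair this, because at a boundary point $p\in\partial W\cap D$ there is no guarantee that a spacelike cap through $p$ exists whose entire lower portion (including its rim) lies in $W$; for points of $M$ itself with $|y'|>\varepsilon$ --- e.g.\ $x=0$, $y_1=0$, $|y'|=1/2$ with $\varepsilon<1/2$ --- reaching them requires surfaces that tilt in the $y'$ directions, which your family, moving only in $y_1$, cannot do. The lateral spreading from the thin ellipsoid to the full set $\{|x|+|y'|<1\}$ is not bookkeeping; it is the entire content of the theorem (it is the analogue of Courant's statement \eqref{Eqn:CourantStatement1}), and as written your family proves only the weaker assertion in which the Cauchy data is assumed to vanish on all of $\{|x|+|y'|<1\}\cap M$.

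The paper resolves this with a two-parameter family: for each unit vector $w\in{\mathbb R}^{d_2}$ it constructs hyperboloids $S_\lambda(w)=\{|x|^2+\langle(y-w),R^TQR(y-w)\rangle=\lambda\}$, $-1\le\lambda\le 0$, where $Q$ (of signature $(-,+,\dots)$) is chosen precisely so that every $S_\lambda(w)$ meets $M$ \emph{inside} $Z_\varepsilon$; the anchoring problem is thereby solved by construction, and the swept region is the cone over $Z_\varepsilon$ with vertex $(0,w)$. The noncharacteristic verification then reduces to checking that $Q_2^2+Q_2$ is positive definite for the explicit matrix \eqref{Eqn:Q2}, and the union of the cones over all $|w|\le 1$ covers $D$. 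To complete your argument you would need to replace (or supplement) your caps with such a tilted, $Z_\varepsilon$-anchored family; the caps alone cannot reach the $y'\neq 0$ portion of $D$.
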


\begin{proof}
Define $R_\varepsilon(w)$ to be the cone over $Z_\varepsilon$ with
vertex $v = (0,w_1,w') \in \{(x,y) \ : \ x=0 \}$. We will show
that for any $w = (w_1,w')$ with $|w| \leq 1$ (namely the unit sphere
in ${\mathbb R}^{d_2}$), the region between the cone
$R_\varepsilon(w)$ and the ellipsoid $Z_\varepsilon$ is a region
of determinacy for the ultrahyperbolic equation. The closure of the
envelope of such ellipsoidal cones includes the region $D$; in fact it
is slightly larger. The result will follow accordingly.

For a given $R_\varepsilon(w)$, the Holmgren--John theorem is based
upon the construction of an analytic family of noncharacteristic
hypersurfaces $S_\lambda$ with which to sweep the region between
$Z_\varepsilon$ and $R_\varepsilon(w)$. Taking the case of the vertex
$v = (0,w)$ with $w=e_1:=(1,0)$, define
\[
S_\lambda := \{ (x,y) \ : \ (1-y_1)^2 - \bigl( |x|^2 +
\frac{|y'|^2}{\varepsilon^2}\bigr) = - \lambda \}
\]
with $-1 \leq \lambda \leq 0$. The normal to $S_\lambda$ is
$N_\lambda = -2(x,(1-y_1),y'/\varepsilon^2)^T$, so that the
characteristic form calculated on $N_\lambda$ is
\[
\frac{1}{4} N_\lambda^T
\begin{pmatrix} -I_{d_1\times d_1} & 0 \\
0 & I_{d_2\times d_2}
\end{pmatrix} N_\lambda
= \frac{1}{4} \bigl( -|x|^2 + (1-y_1)^2 + |y'|^2/\varepsilon^2
\bigr) ~.
\]
Taking into account that $(x,y_1,y') \in S_\lambda$ and solving
for $(1-y_1)^2$,
\[
\frac{1}{4} N_\lambda^T
\begin{pmatrix} -I_{d_1\times d_1} & 0 \\
0 & I_{d_2\times d_2}
\end{pmatrix} N_\lambda
= \frac{1}{4} \Bigl( \bigl(\frac{1+\varepsilon^2}{\varepsilon^4}\bigr)
|y'|^2 - \lambda \Bigr) ~.
\]
Recalling that $\lambda < 0$ (except in the limiting case
$S_\lambda \to R_\varepsilon$) observe that this family of hyperboloids constitutes
a noncharacteristic analytic family which sweeps the region between
$Z_\varepsilon$ and $R_\varepsilon(e_1)$. Thus the Holmgren--John
uniqueness theorem applies, and this region is a region of determinacy
for the ultrahyperbolic equation \eqref{ultra}.

We have already achieved the analogue of the statement
\eqref{Eqn:CourantStatement2} of Courant. Namely, given
the values of a $C^2$ solution $u(x,y)$
to \eqref{ultra} in the space-time ellipsoid
\[
W_\varepsilon := \{(x,y) \ : \
|x|^2 +\frac{|y|^2}{\varepsilon^2} < 1 \} ~,
\]
we may slice it with a hyperplane which contains the $x$-coordinate
axes but which is otherwise arbitrarily oriented in $y$, to determine
a possible $Z_\varepsilon$,
which in turn determines the solution over the larger conical region
$R_\varepsilon$ with base $Z_\varepsilon$. All of these regions have
been shown to be domains of determinacy. Their union contains the
set $D = \{ (x,y) \ : \ |x| + |y| < 1 \}$. Therefore if a
solution vanishes in $W_\varepsilon$ it must also vanish in $D$.

Returning to the problem of the domain of determinacy of the set
$Z_\varepsilon \subseteq M$, we
generalize the above construction to any $w \in {\mathbb R}^{d_2}$ with
$|w| = 1$. Let $w = Re_1$ for $e_1 = (1,0\dots)$, where $R$ is an
orthogonal matrix. Changing variables to $z = Ry$
and using a symmetric matrix $Q$ of signature $(-,+ \dots)$, an
analytic family of hyperboloids is given by
\[
S_\lambda(w) := \{ (x,z) \ : \
|x|^2 + \langle (z-e_1), Q (z-e_1) \rangle = \lambda \}~,
\]
where the Euclidean inner product is given by $\langle \cdot,
\cdot\rangle$.
The matrix $Q$ is to be chosen so that the intersections of
the hyperboloids $S_\lambda(w)$ with the hypersurface $M$ lie
in $Z_\varepsilon$, and sweeps it as $\lambda$ is varied.

At this point we may assume without loss of generality that
$w=(w_1,w') = (w_1,w_2, 0 \dots)$, whereupon $Q$
may be chosen such that
\[
Q = \begin{pmatrix}  Q_2 & 0 \\
0 & \frac{1}{\varepsilon^2} I'' \end{pmatrix}
~,
\qquad Q_2^T = Q_2 ~,
\]
for $Q_2$ a $2\times 2$ symmetric matrix with signature $(-,+)$.
Furthermore, the above rotation is then set to be
\[
R = \begin{pmatrix}  R_2 & 0 \\
0 &   I'' \end{pmatrix} ~, \qquad
R_2 = \begin{pmatrix} \cos(\theta) & \sin(\theta) \\
-\sin(\theta) & \cos(\theta) \end{pmatrix} ~.
\]
In $y-$coordinates the hyperboloid family is expressed
\[
S_\lambda(w) := \{ (x,y) \ : \
|x|^2 + \langle (y-w), R^T Q R(y-w) \rangle = \lambda \}~,
\]
and the stipulation is that $S_0(w)$ should intersect the hypersurface
$M$ in the original ellipsoid $Z_\varepsilon$. This imposes the
condition that
\[
|x|^2 + \langle (x,0,y'), R^T Q R(x,0,y') \rangle
:= |x|^2 + \langle (x,0,y'), B (x,0,y') \rangle
= |x|^2 + \frac{1}{\varepsilon^2}|y'|^2 ~,
\]
where $B_2$ is the upper left-hand $2\times 2$ block of the matrix
$B$. Therefore one finds the matrix elements of $B_2$
\[
b_{11} = -\frac{\varepsilon^2 - \sin^2(\theta)}{\varepsilon
\cos^2(\theta)} ~, \quad
b_{12} = -\frac{\tan(\theta)}{\varepsilon^2} ~, \quad
b_{22} = \frac{1}{\varepsilon^2} ~,
\]
and furthermore, the $2\times 2$ matrix $Q_2$ is
\begin{equation}\label{Eqn:Q2}
Q_2 = \begin{pmatrix} -1 & \tan(\theta) \\
\tan(\theta) & \frac{1}{\varepsilon^2}a \end{pmatrix} ~,
\end{equation}
where $a = a(\varepsilon,\theta) = (1 + (1-\varepsilon^2)\tan^2(\theta))$.
Calculating the characteristic form on the hyperboloids $S_\lambda(w)$,
we compute the normal $N_\lambda(w)$ as
\[
-\frac{1}{2} N_\lambda(w) = (x, Q(z-e_1))^T ~.
\]
Noting that the characteristic form is invariant under rotations
$R$ as above, which leave the coordinate subspaces
${\mathbb R}^{d_1}_x$ and ${\mathbb R}^{d_2}_y$ invariant,
we find that
\[
\frac{1}{4} N_\lambda(w)^T \begin{pmatrix} -I_{d_1\times d_1} & 0 \\
0 & I_{d_2\times d_2} \end{pmatrix} N_\lambda(w)
= -|x|^2 + \langle (z-e_1), Q^2 (z-e_1)\rangle ~.
\]
This is evaluated on the hyperboloid $S_\lambda(w)$, on which
\[
|x|^2 + \langle (z-e_1), Q(z-e_1)\rangle = \lambda ~.
\]
Solving for $|x|^2$, we find
\[
\frac{1}{4} N_\lambda(w)^T \begin{pmatrix} -I_{d_1\times d_1} & 0 \\
0 & I_{d_2\times d_2} \end{pmatrix} N_\lambda(w)
= \langle (z-e_1), [Q^2 + Q](z-e_1)\rangle - \lambda ~.
\]
Specifically, the matrix $[Q^2 + Q]$ is
\[
[Q^2 + Q] = \begin{pmatrix} Q_2^2 + Q_2 & 0 \\
0 & \bigl( \frac{1+\varepsilon^2}{\varepsilon^4} \bigr)
I'' \end{pmatrix} ~.
\]
Using the form \eqref{Eqn:Q2} for $Q_2$, one calculates
\[
[Q_2^2 + Q_2] = \begin{pmatrix}
\tan^2(\theta) & \frac{a}{\varepsilon^2}\tan(\theta) \\
\frac{a}{\varepsilon^2}\tan(\theta) &
\frac{a^2}{\varepsilon^4} + \tan^2(\theta)
\end{pmatrix} ~.
\]
It is easily verified that this is positive definite. Recalling that
$\lambda \leq 0$ in the definition of the analytic families of
hyperboloids, it follows that $S_\lambda(w)$ are all noncharacteristic,
and hence the Holmgren--John theorem applies, thus completing the
argument.
\end{proof}

\begin{acknowledgements}
\textbf{Acknowledgements:}
The research of the first author was partially supported by the Canada Research
Chairs Program and NSERC grant \#238452-06.  The research of the
second author was partially supported by a grant from SSHRC.
\end{acknowledgements}

\end{document}